%% file: oopsla2027.tex
\pdfoutput=1
\documentclass[acmsmall,screen,nonacm]{acmart}
\setcopyright{none}
\acmDOI{}

\input{preamble}
\title{\papertitle}

\begin{abstract}
In wireless sensor and mesh networks, modular robots, and shape-adaptive
computers, computing modules attach, detach, or are replaced while the others
keep running with their local state. A value computed under
the old arrangement, such as a cached next hop, stays in memory but may no
longer describe the current one, so a program can send to a neighbor
that has left without any runtime errors.

We present \emph{morph-aware computing}, a programming model in which the
application specifies a corrective update of such state, called a
\emph{repair}, and the compiler decides when it runs. In \emph{MorphLang}, the
repair is a \texttt{Reconfigure} branch next to the ordinary branches that
handle sends, receives, and sensor samples. \emph{MorphC} compiles MorphLang to
bare-metal RISC-V and inserts a guard before every ordinary branch, so that each
module repairs its state before any other branch sees a new topology. Relying on
this order, MorphC also rejects sends whose target may be a literal or a value
the repair does not always refresh.

For a model of the code MorphC generates, we prove in Rocq that the repair
precedes every later ordinary branch for any number of roles and topology
changes. This shows that send-target checking is sound
for a core language. On 19 MorphLang programs derived from seven network
stacks, removing the repair worsens destination validity in ten of the 13
programs that declare one and never improves it. In a reduced reproduction of
a released Contiki-NG defect, a callback-based version sends to a departed
parent after 60 of 384 changes and the MorphLang version after none. The guards
cost 12 cycles per loop in a microbenchmark, and on an FPGA the delay from
each commit to the first send to the new neighbor matches RTL simulation.
\end{abstract}

\ccsdesc[500]{Software and its engineering~Domain specific languages}
\ccsdesc[300]{Software and its engineering~Formal methods}
\keywords{shape-adaptive computing, reactive programming, topology change,
TLA+, RISC-V}

\begin{document}
\maketitle

\section{Introduction}
\label{sec:introduction}

\input{figures/fig-teaser-float}

Many systems change their physical configuration while running. They consist
of \emph{computing modules}, physical objects that compute and communicate, and
these modules can move, connect, disconnect, or be replaced. Examples are wireless sensor and mesh
networks~\cite{levis2005tinyos,oikonomou2022contiking,openthread}, modular
robots~\cite{goldstein2005matter,derosa2008ldp}, and shape-adaptive computers
such as millimeter-scale wireless chiplets~\cite{kadomoto2019wixi,kadomoto2020riscv},
energy-harvesting CMOS systems~\cite{xu2022microrobot}, and electrostatically
moving chips~\cite{peng2023electrostatic}.
Such a change in the physical arrangement can alter which modules can
communicate directly, that is, the communication topology. We call a change in
this topology, the only part a program observes, a topology change.
Modules that remain after such a change are not restarted. They keep
executing and retain their local memory.


A topology change does not affect the state a program computed before it, so
that state can outlive the topology it described.
Figure~\ref{fig:teaser} shows an example. A sensor caches a
\texttt{parent} that identifies an adjacent relay. After that relay is replaced,
the cached integer remains in memory even though it no longer points to a neighbor, so the
next send is dropped. Polling the neighbor count does not help because
it stays at one across the replacement. We call such state \emph{stale} when it
no longer matches the current topology. In the issue trackers of
released network stacks, we found 13 reports from four
stacks in which a stale value survived a topology change and caused a failure
(Section~\ref{sec:field-evidence}). In ten of them, the stack did report the
change, and the failure lay in what the application updated and when.

Who corrects such state, and when? A runtime can report a topology change,
but it cannot decide which values to retain, reset, or recompute, so the
decision belongs to the application. We call this application-defined
corrective update a \emph{repair}. In the example, the repair sets \texttt{parent}
to a current neighbor. A repair must
run after a topology change and before the state is next used. Enforcing this
order by hand means adding a check to every event handler, and omitting the
check in any one handler reintroduces the failure.

We propose \emph{morph-aware computing}, a programming model in which
the application writes the repair, and the compiler arranges that it runs
before the program's next ordinary branch. We call this guarantee the
\emph{ordered-repair contract}.

We implement the model with \emph{MorphLang} and \emph{MorphC}. A MorphLang
program defines one role per module. A role consists of \emph{ordinary branches}, which handle sends, receives, and
sensor samples, and a \texttt{Reconfigure} branch that holds the repair. MorphC
compiles the program to bare-metal RISC-V RV32I and inserts a \emph{revision guard}
before every ordinary branch. The guard compares a counter that advances at
each topology change, the \emph{revision}, with the last revision the program
has repaired for, and it runs the repair first if they differ. Because the
branches are part of the language rather than a runtime library, MorphC sees
every ordinary branch and can guard all of them.
MorphC can also rely on the repair running first. Its
\emph{send-target checking} uses this to reject numeric destinations that may
be a literal core identifier or persistent state that no repair path refreshes.
Without the ordering, this check would be unsound because an ordinary branch
could send before the refresh runs (Section~\ref{sec:send-target-validation}).

Related mechanisms fall short in different ways (Section~\ref{sec:related-work}).
nesC and P~\cite{gay2003nesc,desai2013p} can express the update as a handler but
leave its order relative to later handlers to the program. Choral~\cite{giallorenzo2024choral}
assumes reliable communication, which a send to a departed neighbor violates.
Virtual synchrony and ULFM~\cite{birman1987virtual,bland2013ulfm} order or agree
on membership changes for a whole group. Aggregate languages and dynamic software
updating~\cite{viroli2019aggregate,hicks2005dsu} absorb a change by re-evaluating
every round or by applying a state transformer once per patch. None of these mechanisms orders an
application update against a module's own later branches for each topology change.

Earlier work presented a functional reactive prototype of MorphLang with C++
translation~\cite{izawa2023functional}, the reactive language design from which the
ordinary branches come~\cite{izawa2024reactive}, and the VisMorph editor
and inspector~\cite{izawa2025vismorph}. It did not define when application
state must be repaired relative to later computation, nor did it provide a
compiler that enforces this order on the target architecture. This paper closes
both gaps with a programming model, a compiler, formal validation, and an
experimental evaluation.

The contributions of this paper are as follows.

\textbf{A programming model, language, and compiler}
(Sections~\ref{sec:morphlang} and~\ref{sec:semantics-implementation}).
MorphLang repairs an explicit branch.
MorphC generates revision guards, rejects the numeric send targets described
above, and emits bare-metal RV32I.

\textbf{A formal contract with mechanized guarantees}
(Sections~\ref{sec:semantics-implementation} and~\ref{sec:validation}).
We define the ordered-repair contract operationally. For a model of the generated code,
Rocq (formerly Coq)~\cite{coq} proves that
every ordinary branch runs after the repair for its revision, for any number of
roles, and that each delivered revision is repaired at most once. It also
proves send-target checking sound for a core language. TLA+~\cite{lamport2002tla}
and a binary checker of the generated binaries complement the proofs.

\textbf{An evaluation of repair, compiler assistance, and cost}
(Section~\ref{sec:evaluation}).
On 19 hand-written MorphLang programs derived from 13 routines of seven released
network stacks, run
in the cycle-level simulator \emph{MorphSim}, removing repair lowers
destination validity in ten of the 13 programs that declare a repair and never
raise it. Omitting the first generated guard of each program increases sends
to departed relays, and the guards cost
12 cycles per two-branch loop. A reduced reproduction of a Contiki-NG ordering
bug sends to a departed parent after 60 of 384 changes with a callback and
after none with MorphC. On an FPGA implementation of one module, every repair
window and all 114 phase-end outcomes agree with RTL simulation.

\section{Background and Problem Setting}
\label{sec:background}

\subsection{Target System Model}
\label{sec:target-system-model}

Morph-aware computing targets systems of computing modules that keep executing
and keep their local state while the system changes shape. We distinguish two
levels. The \emph{physical arrangement} is where the modules are and how they
behave, and it is not visible to the program. The \emph{topology} is the graph
\(T=(V,E)\) that the arrangement induces, where \(V\) is the set of modules and
\((u,w)\in E\) means that \(u\) and \(w\) can communicate directly. Formally,
a topology change is a committed transition from one such graph to
another. It may add or remove edges, split a connected component of \(T\), or
join two. A change in
arrangement matters to the program only when it changes \(T\).

Each module runs one persistent computation, a \emph{role}. The model
requires two things of a platform. A module keeps executing and keeps its local
state across a topology change. One read of its local link interface returns
its current one-hop \emph{neighbor snapshot} together with a revision that
advances on every committed change. Modules need not observe a change at the same time. State whose meaning depends
on \(T\), such as a cached next hop, is \emph{topology-dependent state}. It
becomes stale when a topology change makes the graph it described obsolete,
even though no write touched it, so staleness is relative to the revision, not
to a later write.

These requirements do not depend on the hardware. Shape-adaptive
chiplets~\cite{kadomoto2020riscv}, modular robots~\cite{goldstein2005matter,derosa2008ldp},
and the sensor and mesh nodes that run the stacks of
Section~\ref{sec:field-evidence} keep running while their neighbors change, and a
runtime can meet the second requirement in software by publishing a copy of its
neighbor table with a counter that advances on each change. Morph-aware
computing addresses the application state that survives a change, not the
routes that routing protocols rebuild~\cite{rfc6550,rfc7181}. We evaluate it
on MorphSim and on the RTL of one RISC-V chiplet (Section~\ref{sec:setup}),
but nothing in the model depends on chiplets.

\subsection{Evidence that the Problem Occurs in Practice}
\label{sec:field-evidence}

To test whether stale topology-dependent state causes failures in released
software, we searched the issue trackers of six network stacks (Contiki-NG,
Contiki, RIOT, OpenThread, Zephyr, and TinyOS). Twenty-eight symptom phrases
returned 103 unique reports, and we fixed this candidate set before reading it.
We included a report only if it named a topology-derived value, a topology or
membership change, evidence that the value survived the change, and a resulting
failure.

Thirteen reports from four stacks meet all four conditions. Ten are failed
sends and three are stale reads. We classified them by what morph-aware
computing could do about them. In six, a cached destination goes stale, the
pattern that send-target checking targets. In four more, the stack notifies the
module, but the failure comes from the order in which handlers run, which the
ordered-repair contract addresses. In the remaining three, the module receives no
notification at all. Appendix~\ref{app:field-evidence} lists all 13 with the
screening procedure.

The search shows that the problem occurs, but not how often. The authors chose the
repositories and phrases. Each search returned at most 40 results, and one
person made every inclusion judgment so that no rate can be estimated from the 103
candidates.

\subsection{Design Requirements}
\label{sec:design-requirements}

A program that handles topology changes correctly must do three things. It
must update its topology-dependent state after each change, keep all other
state, and make sure that no event handler uses the old state before the
update. A topology-change callback provides only the first. Unless something
orders the callback before every later \texttt{Send}, \texttt{Receive}, timer,
and device event, the program must compare revisions at each of these points
itself. We state these obligations, and what a compiler needs to enforce them,
as four requirements.

\emph{R1, explicit repair.}
The programming model provides a structured point for application-defined repair
of topology-dependent state.

\emph{R2, ordered repair.}
Once a role observes a newer topology revision, its repair must complete before
its next ordinary branch.

\emph{R3, persistent state.}
Application state survives a topology change unless explicitly updated by the program.

\emph{R4, compiler-controlled scheduling points.}
The execution structure exposes well-defined points at which compiler-generated
guards can observe topology changes and schedule repair.

Hand-written code can break R2 without any runtime errors, because it needs one
check per ordinary branch and nothing reports a missing one
(Section~\ref{sec:rq2}). MorphLang meets R2 through R4. All ordinary computation
lies in branches that the compiler sees, so MorphC guards every branch of a
role that declares repair. The repair body is a delimited region, so MorphC can
also analyze what it refreshes (Section~\ref{sec:send-target-checking}).
The requirements are local to one role. They need no globally synchronized
view of the topology; they leave the content of the repair to the application,
and they do not guarantee communication when no physical link exists.

\input{figures/fig-stack-compact}  

\section{Morph-Aware Programming Model and MorphLang}
\label{sec:morphlang}

This section shows how a programmer writes a role in MorphLang and how MorphC
compiles it. In a MorphLang program, the programmer states what to repair in a
\texttt{Reconfigure} branch (R1). At compilation time, MorphC decides when it
runs (R2), leaves all other state as it was (R3), and inserts its checks at the
branches (R4). Figure~\ref{fig:software-stack} follows the sensor of
Figure~\ref{fig:teaser} through send-target checking and code generation to
bare-metal RISC-V RV32I~\cite{riscv2024unpriv}.

\subsection{Roles and Events}
\label{sec:reactive-roles}

A MorphLang program defines roles, and MorphC compiles each role to its own
binary. A role is either a \texttt{coordinator}, which originates data, or a
\texttt{router}, which receives data and processes or forwards it. The
ordered-repair contract applies to both kinds of role.

A role reacts to events, and each event has its own branch. These
ordinary branches, inherited from MorphLang's earlier reactive
design~\cite{izawa2024reactive}, are \texttt{Send} and \texttt{Receive}, which
represent opportunities for outgoing and incoming communication, and
\texttt{Sample}, which represents a device-triggered action. A branch
communicates through the \texttt{Node} interface. For example, a \texttt{Send}
branch may call \texttt{Node.sendTo}, and a \texttt{Receive} branch may poll
incoming data through \texttt{Node.tryGetData}.

Morph-aware computing adds one optional branch, \texttt{Reconfigure}, which
holds the repair. Figure~\ref{fig:sensor-program} shows the sensor of
Figure~\ref{fig:teaser}. Line~2 declares \texttt{parent}, its
topology-dependent state. The \texttt{Send} branch (lines~4--5) sends to
\texttt{parent} without inspecting the topology, and the repair in
\texttt{Reconfigure} (lines~6--7) selects \texttt{parent} again from the
neighbor snapshot. MorphC places a guard before \texttt{Send}, so after the
replacement of Figure~\ref{fig:teaser} the repair runs before the next send.

\subsection{Core Syntax and Standard Interfaces}
\label{sec:morphlang-syntax}

\begin{figure}[t]
\begin{minipage}[h]{.33\linewidth}
\includegraphics{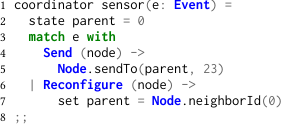}
\caption{The sensor of Figure~\ref{fig:teaser} in MorphLang.\label{fig:sensor-program}}
\Description{A MorphLang coordinator named sensor with state parent. Its Send
branch sends the value 23 to parent, and its Reconfigure branch sets parent to
the first neighbor in the snapshot.}
\end{minipage}\hfill
\begin{minipage}[h]{.64\linewidth}
{\footnotesize
\[
\begin{array}{@{}rcl@{}}
Prog &::=& \overline{t}\ \overline{f}\ \overline{\rho}\\
t &::=& \texttt{type}\ T=\{\overline{x:\sigma}\}
       \mid\texttt{type}\ T=C_1(\overline{\sigma})\mid\cdots\mid C_n(\overline{\sigma})\ \texttt{;;}\\
f &::=& \texttt{fun}\ g(\overline{x:\sigma})=a\ \texttt{;;}\\
\rho &::=& k\ x(e:\texttt{Event})=\overline{s}\ m\ \texttt{;;}\\
k &::=& \texttt{coordinator}\mid\texttt{router}\\
s &::=& \texttt{state}\ x=a
       \mid\texttt{device}\ x:\tau(\overline{\ell=a})\\
m &::=& \texttt{match}\ e\ \texttt{with}\ \overline{b}\\
b &::=& p\mathbin{\texttt{->}}a\\
p &::=& \texttt{Send(node)}\mid\texttt{Receive(node)}
       \mid\texttt{Reconfigure(node)}\\
a &::=& c\mid x\mid\texttt{let}\ x=a\ \texttt{in}\ a
       \mid\texttt{if}\ a\ \texttt{then}\ a\ \texttt{else}\ a\\
  &\mid& \texttt{set}\ x=a\mid a\mathbin{\mathit{op}}a
       \mid g(\overline{a})
       \mid x.f(\overline{a})\mid\texttt{Node}.f(\overline{a})\\
  &\mid& \texttt{match}\ a\ \texttt{with}\
       \overline{C(\overline{x})\mathbin{\texttt{->}}a}
\end{array}
\]
}
\caption{Core syntax of MorphLang, where \(\overline{x}\) denotes a sequence.\label{fig:syntax}}
\Description{A grammar in which a program consists of type, function, and
role definitions. A role is a coordinator or router with state and device
declarations and a match over Send, Receive, and Reconfigure events whose
bodies are expressions.}
\end{minipage}
\end{figure}

Figure~\ref{fig:syntax} summarizes the core language. A role declares
\texttt{state} variables, which persist across branch invocations and across
topology changes until the program changes them with \texttt{set}. A
\texttt{let} binding, in contrast, lasts for one branch invocation.
\texttt{Sample} branches, omitted from the grammar, are scheduled like other
ordinary branches.

A device expression \(x.f(\overline a)\) calls an interface introduced by a
\texttt{device} declaration. The predeclared \texttt{Node} interface sends to a
role or numeric core identifier, polls incoming data, and reads the current
neighbor snapshot. These are ordinary expressions inside branch bodies, so
MorphC schedules branches without inspecting their bodies.

The implementation includes further conventional features used by the
evaluation, including records and variants, non-recursive functions, mutable
references, arrays, bounded loops, and I2C/SPI device operations. These features
do not affect the semantics of topology changes. All values have statically known
size, allowing MorphC to generate code without garbage collection or a
general-purpose language runtime.

\section{Semantics and Implementation}
\label{sec:semantics-implementation}

\subsection{Operational Semantics of Topology-Change Handling}
\label{sec:operational-semantics}

This section fixes when a repair must run. The semantics ignores what a branch body does and when topology changes happen. For each role, it tracks only the latest revision the role has seen and whether a repair is still owed.

\subsubsection{State and Assumptions}

Publishing a topology change and seeing it are separate events. The execution
environment \emph{commits} a topology change by publishing a new neighbor
snapshot under the next revision. It then delivers that revision to each role on
its own schedule. Two roles may therefore see the same topology change at
different times.

For a role \(r\), let \(H(r,e)\) denote the branch body for event
\(e\). The event \(\mathsf{Reconfigure}\) is the repair, and every other event
is ordinary. The core
ordinary events are \(\mathsf{Send}\) and \(\mathsf{Receive}\), and an
implementation may add others such as \(\mathsf{Sample}\).
The semantics treats all ordinary events alike.

Running a branch is written as
\[
\mathsf{exec}(r,e,\delta,M)=M'
\]
and means that role \(r\) runs branch \(e\) on its delivered
snapshot \(\delta\), so the runtime state \(M\) becomes \(M'\). The runtime
state holds persistent role state, queries, and observable output. This lets the
rules fix the order of branches without describing MorphLang expressions.

Let
\[
\mathcal R=\{r\mid\mathsf{Reconfigure}\in\operatorname{dom}(H(r,\cdot))\}
\]
be the roles that declare \texttt{Reconfigure}. The state of the model is
\[
\langle v,T,d,\Delta,M,P\rangle ,
\]
where \(v\) is the latest committed revision and \(T\) is its neighbor snapshot. The
map \(d\) gives the revision delivered to each role, and \(\Delta(r)\) is the
snapshot delivered to that role. The set \(P\subseteq\mathcal R\) holds the roles that still owe a
\texttt{Reconfigure} run. Execution starts with \(v=0\), \(d(r)=0\),
\(\Delta(r)=T\) for every \(r\), and \(P=\emptyset\). Each role starts from its declared initial state under the initial snapshot
\(T\), which does not trigger \texttt{Reconfigure}. The contract
covers revisions delivered after a role starts, and keeping the initial state
consistent with the initial snapshot is the programmer's obligation.

The model keeps the revision \(v\) apart from the
snapshot \(T\) because equal snapshots are not enough. If the topology
changes from \(T_0\) to \(T_1\) and back to \(T_0\), the two snapshots are
equal, yet the role must still learn that a newer revision exists.
Each branch run is one step, so in this model a commit falls between two branch
runs. Section~\ref{sec:morphc} relaxes this.

\begin{samepage}
\subsubsection{Operational Rules}

Four rules say which branch may run.

\[
\frac{\ }
{\langle v,T,d,\Delta,M,P\rangle
\longrightarrow
\langle v+1,T',d,\Delta,M,P\rangle}
\quad\textsc{Commit}
\]

\[
\frac{d(r)<v}
{\langle v,T,d,\Delta,M,P\rangle
\longrightarrow
\langle v,T,d[r\mapsto v],\Delta[r\mapsto T],M,
P\cup(\{r\}\cap\mathcal R)\rangle}
\quad\textsc{Deliver}
\]

\[
\frac{r\in P\qquad
\mathsf{exec}(r,\mathsf{Reconfigure},\Delta(r),M)=M'}
{\langle v,T,d,\Delta,M,P\rangle\xrightarrow{r:\mathsf{reconfigure}}
\langle v,T,d,\Delta,M',P\setminus\{r\}\rangle}
\quad\textsc{Reconfigure}
\]

\[
\frac{r\notin P\qquad
e\in\operatorname{dom}(H(r,\cdot))\setminus\{\mathsf{Reconfigure}\}
\qquad
\mathsf{exec}(r,e,\Delta(r),M)=M'}
{\langle v,T,d,\Delta,M,P\rangle\xrightarrow{r:e}
\langle v,T,d,\Delta,M',P\rangle}
\quad\textsc{Ordinary}
\]
\end{samepage}

\textsc{Commit} publishes a snapshot and leaves application state alone.
\textsc{Deliver} gives one role the latest committed snapshot and, if the role
declares \texttt{Reconfigure}, marks it as owing a repair. \textsc{Reconfigure}
runs the repair on the delivered snapshot and clears the mark. \textsc{Ordinary}
applies only while the role has no mark.

In the example, write $S$ for the sensor. It has received revision~4 and run
its repair for it, so $S\notin P$, and it stores relay~3 in \texttt{parent}. The
replacement commits revision~5, and delivering it to $S$ marks $S$ as owing a
repair.
\[
S\notin P \xrightarrow{\textsc{Commit}(5)} S\notin P
\xrightarrow{\textsc{Deliver}(S,5)} S\in P.
\]
While $S\in P$, \textsc{Ordinary} does not apply to $S$, so its \texttt{Send}
cannot run. \textsc{Reconfigure} runs the repair on the delivered
snapshot and removes $S$ from $P$.

A delivery always installs the newest committed revision. The rules otherwise leave the scheduling among
roles and among ordinary events free.
Section~\ref{sec:morphc} gives the schedule MorphC generates.

\subsubsection{Properties of the Execution Contract}
\label{sec:execution-contract}

The rules give three guarantees, sketched here and mechanized in
Section~\ref{sec:dispatch-validation}. Theorem~\ref{thm:ordered-repair} is the
ordered-repair contract itself (requirement R2 of
Section~\ref{sec:design-requirements}), so a role never runs an ordinary branch
on a snapshot it has not yet repaired for. Theorem~\ref{thm:coalescing} shows that
changes arriving faster than a role repairs need only one repair, and
Theorem~\ref{thm:continuity} concerns the application state (requirement R3 of
Section~\ref{sec:design-requirements}), which a
topology change by itself never touches.

\begin{theorem}[Ordered repair]\label{thm:ordered-repair}
If a role \(r\in\mathcal R\) receives a newer revision and later runs an
ordinary branch, an \(r:\mathsf{reconfigure}\) step occurs after the most recent
such delivery and before that branch.
\end{theorem}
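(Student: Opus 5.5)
We argue directly from the four rules by reasoning over a finite execution trace. Fix an execution \(\sigma_0\to\sigma_1\to\cdots\to\sigma_n\) starting in the initial state and a role \(r\in\mathcal R\). Suppose step \(i\) is a \textsc{Deliver} step for \(r\), and step \(j>i\) is an \(r:e\) step with \(e\) ordinary. Take \(i\) to be the \emph{last} \textsc{Deliver} step for \(r\) before \(j\), which exists because the premise says such a delivery exists. We must show there is a \(k\) with \(i<k<j\) such that step \(k\) is labelled \(r:\mathsf{reconfigure}\).

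The core of the proof is an invariant on the flag \(r\in P\). First, membership of \(r\) in \(P\) changes only at \textsc{Deliver}\((r)\), which adds \(r\) because \(r\in\mathcal R\), and at \textsc{Reconfigure} for \(r\), which removes it. This holds because \textsc{Commit} and \textsc{Ordinary} leave \(P\) unchanged, and \textsc{Deliver} and \textsc{Reconfigure} for a role \(r'\neq r\) only add or remove \(r'\). I would state this as a small lemma by case analysis on the rule used in each step. Second, immediately after step \(i\) we have \(r\in P_{i+1}\), since the \textsc{Deliver} rule puts \(r\in\{r\}\cap\mathcal R\) into \(P\). Third, the \textsc{Ordinary} rule at step \(j\) requires \(r\notin P_j\).

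Now argue by contradiction. Suppose no \(r:\mathsf{reconfigure}\) step occurs in \((i,j)\). By the choice of \(i\), no \textsc{Deliver}\((r)\) step occurs in \((i,j)\) either. By the lemma, \(r\)'s membership in \(P\) is then constant on \(\sigma_{i+1},\dots,\sigma_j\). A simple induction on the step index gives \(r\in P_j\), which contradicts the premise of \textsc{Ordinary} at step \(j\). Hence some step \(k\in(i,j)\) is \(r:\mathsf{reconfigure}\). In fact the first such \(k\) works, which gives the ``after the most recent delivery and before that branch'' phrasing.

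There is no serious obstacle. The only delicate point is bookkeeping: making precise that ``receives a newer revision'' means a \textsc{Deliver} step whose premise \(d(r)<v\) holds. We also need \(i\) to be chosen as the latest delivery, so that an intervening delivery cannot re-add \(r\) and leave the earlier reconfigure step before it. The frame lemma about \(P\) is where a mechanization spends its effort, since it ranges over all rules and all roles \(r'\neq r\). It is routine once the rules are written as a relation.
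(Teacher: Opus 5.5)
Your argument is correct and matches the paper's proof sketch: \textsc{Deliver} puts \(r\) into \(P\), \textsc{Ordinary} is blocked while \(r\in P\), and only an \(r:\mathsf{reconfigure}\) step removes \(r\). Your frame lemma over all rules and roles, together with choosing the last delivery before the branch, only makes the trace bookkeeping explicit; the paper's Rocq proof states the same facts as invariants of reachable states instead of a contradiction over a trace.
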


\begin{proof}[Proof sketch]
\textsc{Deliver} puts \(r\) in \(P\), \textsc{Ordinary} does not apply while
\(r\in P\), and only \textsc{Reconfigure} removes \(r\) from \(P\).
\end{proof}

\begin{theorem}[Coalescing]\label{thm:coalescing}
Each delivered revision leads to at most one \(r:\mathsf{reconfigure}\) step.
If \(k\geq 1\) commits occur before a role \(r\in\mathcal R\) receives a
delivery, that delivery sets \(d(r)\) to the newest revision \(v\), and none of
the intermediate revisions is ever repaired.
\end{theorem}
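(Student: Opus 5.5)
The plan is to argue directly from the four rules of the transition system, by induction over executions, with the membership of \(r\) in \(P\) as the central bookkeeping device. First I would make precise what ``leads to'' means. Each \(r:\mathsf{reconfigure}\) step is attributed to the most recent \textsc{Deliver} step for \(r\) that precedes it. I would then prove an invariant on reachable states. Between two consecutive \textsc{Deliver} steps for \(r\), or after the last one, at most one \(r:\mathsf{reconfigure}\) step occurs, and before the first delivery none occurs.

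The invariant follows from three observations.
\begin{itemize}
\item Initially \(P=\emptyset\).
\item Only \textsc{Deliver} for \(r\) adds \(r\) to \(P\); \textsc{Commit}, \textsc{Ordinary}, and steps of other roles leave \(r\)'s membership unchanged.
\item \textsc{Reconfigure} for \(r\) requires \(r\in P\) and removes it.
\end{itemize}
Hence between two \textsc{Deliver}\((r)\) steps the predicate \(r\in P\) goes from true to false at most once. That is exactly one enabling token per delivery, and it yields the first claim. Revisions are never delivered twice: \textsc{Deliver} requires \(d(r)<v\) and sets \(d(r)=v\), and \(v\) only increases. So ``per delivery'' and ``per delivered revision'' coincide.

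For the second claim, suppose \(k\ge 1\) \textsc{Commit} steps occur after \(r\)'s previous delivery, which set \(d(r)=v_0\), and before the next one. The value \(v\) then equals \(v_0+k\) at that delivery. \textsc{Deliver} unconditionally writes the current \(v\) and \(T\) into \(d(r)\) and \(\Delta(r)\), so \(d(r)\) becomes the newest revision. For the intermediate revisions \(v_0+1,\dots,v_0+k-1\), I would show that no state ever has \(d(r)\) equal to one of them. This holds because \(d(r)\) changes only at \textsc{Deliver} steps for \(r\), and these take values that are strictly increasing and equal to the current \(v\). Consequently no repair ever runs on a snapshot for an intermediate revision. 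Each \textsc{Reconfigure} step reads \(\Delta(r)\), which was last set together with \(d(r)\), and \(d(r)\) never takes an intermediate value.

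The main obstacle is definitional rather than technical. ``Intermediate revision is repaired'' has to mean that a \textsc{Reconfigure} step executes with \(d(r)\) equal to that revision. To make this rigorous I would prove the auxiliary invariant that \(\Delta(r)\) is always the snapshot committed under revision \(d(r)\). That requires augmenting the state with a history map from revisions to committed snapshots, because the rules overwrite \(T\). With that ghost history in place, both parts become straightforward inductions over the step relation. This is the shape I would expect the Rocq mechanization to take.
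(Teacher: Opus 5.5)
Your proof is correct and follows the paper's own argument. Membership of \(r\) in \(P\) acts as a single token: only \textsc{Deliver} creates it, always installing the newest revision with strictly increasing \(d(r)\), and \textsc{Reconfigure} consumes it, while \(d(r)\) never takes an intermediate value, so no intermediate revision is repaired. The ghost history tying \(\Delta(r)\) to the snapshot of revision \(d(r)\) is extra rigor that the paper does not need, since it simply takes the repaired revision to be \(d(r)\) at the \textsc{Reconfigure} step; also, contrary to your guess about the mechanization, the paper leaves the no-intermediate-repair part unmechanized.
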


\begin{proof}[Proof sketch]
\textsc{Reconfigure} removes \(r\) from \(P\), and only a \textsc{Deliver}
of a strictly newer revision, which requires \(d(r)<v\), adds it back, so a
delivered revision is repaired at most once. \textsc{Deliver} always installs
the newest committed revision, so after \(k\) commits it sets \(d(r)=v\)
directly. Only a revision in \(d(r)\) can be repaired, and no step installs an
intermediate revision there, so none is repaired.
\end{proof}

\begin{theorem}[State continuity]\label{thm:continuity}
\textsc{Commit} and \textsc{Deliver} leave \(M\) unchanged.
\end{theorem}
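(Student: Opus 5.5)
The plan is to prove Theorem~\ref{thm:continuity} by direct inspection of the two rules, treating the state tuple \(\langle v,T,d,\Delta,M,P\rangle\) componentwise. \textsc{Commit} rewrites \(\langle v,T,d,\Delta,M,P\rangle\) to \(\langle v+1,T',d,\Delta,M,P\rangle\). The fifth component on the right is syntactically the same \(M\) as on the left, and the rule has no premise involving \(\mathsf{exec}\). So \(M\) is preserved.

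\textsc{Deliver} rewrites the state to \(\langle v,T,d[r\mapsto v],\Delta[r\mapsto T],M,P\cup(\{r\}\cap\mathcal R)\rangle\). Its only premise is \(d(r)<v\), which reads \(d\) and never mentions \(M\). Again the fifth component is copied unchanged. The only parts that change are the delivered revision, the delivered snapshot, and the pending set, and none of these is part of \(M\) by the definition of the state.

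To make the result useful for R3, I would also lift it to a corollary over finite runs. Take any sequence of steps in which every step is \textsc{Commit} or \textsc{Deliver}, with arbitrary interleaving and arbitrarily many topology changes and deliveries. Induction on the length of the sequence shows that the runtime state at the end equals the runtime state at the start. Hence every change to persistent role state is attributable to an \(\mathsf{exec}\) in a \textsc{Reconfigure} or \textsc{Ordinary} step, that is, to application code.

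There is no real obstacle. The only point that needs care is a modeling one: \(M\) must not secretly contain the snapshot or revision, for example through queries of the neighbor table. In this model the snapshot lives in \(T\) and \(\Delta\), outside \(M\), and a branch reaches it only as the explicit argument \(\Delta(r)\) of \(\mathsf{exec}\). This separation is exactly what lets the statement hold syntactically, and I would state it explicitly before the case analysis.
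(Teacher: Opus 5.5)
Your proof is correct and is the same argument as the paper's, which reads the result directly off the two rules because their conclusions carry \(M\) over unchanged. The finite-run corollary is a harmless extension, and the separation of the snapshot from \(M\) matters for what the theorem means rather than for its proof, since the rules copy \(M\) whatever it contains.
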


\begin{proof}[Proof sketch]
Immediate from the two rules, whose conclusions carry \(M\) over unchanged.
\end{proof}

\noindent Theorem~\ref{thm:coalescing} is what allows an
implementation to keep one revision number per role instead of a log of
changes.

The theorems fix only when the repair runs, not what it achieves. The
following proposition, which is not mechanized, connects the two. It concerns a property
\(\Phi_r(T,M)\) that a role \(r\in\mathcal R\) should keep, a condition on a
snapshot \(T\) and the state \(M\) whose content the application decides. For
the sensor of Figure~\ref{fig:teaser}, \(\Phi_r\) says that \texttt{parent} is
a neighbor in \(T\). The proposition states that if the program keeps
\(\Phi_r\) in three ways, every ordinary branch of \(r\) sees it hold.

\begin{proposition}[Conditional application invariant]\label{prop:invariant}
Suppose that
(1)~\(\Phi_r(\Delta(r),M)\) holds when execution starts,
(2)~every \texttt{Reconfigure} run of \(r\) makes \(\Phi_r\) true for the snapshot
\(\Delta(r)\) it runs on, whatever \(M\) is, and
(3)~every other step, such as an ordinary branch of \(r\) or a step of another
role, keeps \(\Phi_r\) true while \(\Delta(r)\) is unchanged.
Then \(\Phi_r(\Delta(r),M)\) holds whenever \(r\) starts an ordinary branch.
\end{proposition}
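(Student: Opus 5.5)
The plan is to strengthen the claim into an invariant that holds in every reachable state and then read the proposition off it. Define
\[
I(\langle v,T,d,\Delta,M,P\rangle)\;:=\; r\in P \;\lor\; \Phi_r(\Delta(r),M).
\]
I would show by induction on the length of an execution that \(I\) holds in every reachable state. The proposition then follows. An \(r:e\) step with \(e\) ordinary is an instance of \textsc{Ordinary}, whose premise requires \(r\notin P\). So in the state where \(r\) starts that branch, \(I\) collapses to \(\Phi_r(\Delta(r),M)\).

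\emph{Base case.} Execution starts with \(P=\emptyset\), so \(I\) reduces to \(\Phi_r(\Delta(r),M)\) in the initial state. That is exactly hypothesis~(1).

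\emph{Inductive step.} I would do a case split on the four rules, separating steps of \(r\) from steps of other roles.
\begin{itemize}
\item \textsc{Commit} changes only \(v\) and \(T\). It leaves \(\Delta\), \(P\) and, by Theorem~\ref{thm:continuity}, \(M\) unchanged, so both disjuncts of \(I\) are preserved.
\item \textsc{Deliver} to \(r\) is the only step that changes \(\Delta(r)\), and hence the only one not covered by hypothesis~(3). Since \(r\in\mathcal R\), the rule adds \(r\) to \(P\), so \(I\) holds through its first disjunct, whatever the new snapshot is.
\item \textsc{Deliver} to another role \(r'\neq r\) leaves \(\Delta(r)\) and \(M\) unchanged, again by Theorem~\ref{thm:continuity}. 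It can only enlarge \(P\), so \(I\) is preserved.
\item \textsc{Reconfigure} for \(r\) removes \(r\) from \(P\). Hypothesis~(2) gives \(\Phi_r(\Delta(r),M')\) for the unchanged \(\Delta(r)\), regardless of whether \(\Phi_r\) held before. So \(I\) holds through its second disjunct.
\item Every remaining step leaves \(\Delta(r)\) unchanged and does not remove \(r\) from \(P\). These are ordinary steps of \(r\) and \textsc{Reconfigure} or \textsc{Ordinary} steps of other roles. If \(r\in P\) before the step, it stays in \(P\). Otherwise \(\Phi_r\) held before, and hypothesis~(3) keeps it true afterwards.
\end{itemize}

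The main obstacle is choosing the right invariant, not any single case. The plain statement ``\(\Phi_r(\Delta(r),M)\) always holds'' is false, because a \textsc{Deliver} to \(r\) installs a new snapshot that the stale state need not satisfy. The disjunct \(r\in P\) absorbs exactly this window between delivery and repair. Proving that every step outside it either restores \(\Phi_r\) or preserves it is what the ordering argument of Theorem~\ref{thm:ordered-repair} provides in invariant form.

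One point in this step needs care: hypothesis~(2) has to be used in its strong form, ``whatever \(M\) is.'' The state just before a \textsc{Reconfigure} step may violate \(\Phi_r\), since the role is in \(P\) precisely because its state may be stale. A weaker hypothesis that only asked the repair to preserve \(\Phi_r\) would not suffice here.
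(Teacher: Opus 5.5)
Your proof is correct and follows the same route as the paper: the paper also proves by induction over steps that \(\Phi_r(\Delta(r),M)\) holds whenever \(r\notin P\), which is exactly your invariant \(r\in P\lor\Phi_r(\Delta(r),M)\). The case analysis is the same (a \textsc{Deliver} to \(r\) puts \(r\) into \(P\), \textsc{Reconfigure} restores \(\Phi_r\) by~(2), and every other step keeps it by~(3)), and your point that~(2) must hold for arbitrary \(M\) matches the paper's remark after its proof sketch.
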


\begin{proof}[Proof sketch]
Only a \textsc{Deliver} to \(r\) changes \(\Delta(r)\),
the snapshot the role holds. We show by induction over steps that
\(\Phi_r(\Delta(r),M)\) holds whenever \(r\notin P\). It holds initially
by~(1), since \(P=\emptyset\). A \textsc{Deliver} to \(r\) may change the
snapshot and break \(\Phi_r\), but it puts \(r\) in \(P\) at the same time, and
while \(r\in P\) the \textsc{Ordinary} rule does not apply to \(r\). The next
\textsc{Reconfigure} of \(r\) makes \(\Phi_r\) true again by~(2) and removes
\(r\) from \(P\). Every other step leaves \(\Delta(r)\) unchanged and keeps
\(\Phi_r\) by~(3). An ordinary branch of \(r\) therefore starts only in a state
with \(r\notin P\), where \(\Phi_r\) holds.
\end{proof}

\noindent Condition~(2) must hold
for any \(M\), because other roles may change \(M\) while \(r\) waits for its
repair. The snapshot in~(3) is the delivered \(\Delta(r)\), not the committed
\(T\), which \textsc{Commit} may change at any time.
Theorem~\ref{thm:check-sound} proves the corresponding property for send
targets, assuming the repair has run (Theorem~\ref{thm:ordered-repair}).

\subsection{Compiler Implementation of Ordered Repair}
\label{sec:morphc}

MorphC implements the rules of Section~\ref{sec:operational-semantics} as
generated code.

For each role that declares \texttt{Reconfigure}, MorphC keeps the
\emph{handled revision}, the last revision whose repair has finished. Initially,
it is the revision under which the role starts. Before each ordinary branch, a
generated guard reads the revision reported by the local link interface and
compares it with the handled revision. If they differ, a repair helper reads
the revision again, runs every repair body on the snapshot from that read, stores that
revision as handled only after all repair code returns,
and then enters the branch. This store of the handled revision records a finished
repair, not entry into the repair body. A role without \texttt{Reconfigure}
owes no repair and gets no guards.

The generated code is one loop that runs every \texttt{Send} and
\texttt{Receive} branch in source order and then every \texttt{Sample}
branch. Each branch has its own guard, placed before the branch's own gate,
such as the test for a pending message, so a repair can run even when the
branch body is then skipped. Section~\ref{sec:rq2} compares the generated guards with manually written
checks.

\paragraph{How the code matches the rules.}
When a guard finds that the delivered and handled revisions differ, a topology
change has been delivered, and its repair has not finished. This case
corresponds to \(r\in P\) in Section~\ref{sec:operational-semantics}. The guard lets an ordinary
branch start only after the repair has finished, which is the condition
\(r\notin P\) of the \textsc{Ordinary} rule.

\paragraph{Reading the revision and snapshot together.}
Delivery happens at points the compiler controls. A commit may
arrive while a branch runs, but the core sees it only at its next revision
read. One read of the link interface returns the latest committed revision and
its neighbor snapshot together. Such reads occur only in guards and in the repair
helper, so delivery happens only there. This is one of the orders that the
\textsc{Deliver} rule allows. MorphSim and the RTL of Section~\ref{sec:setup}
hold that snapshot fixed until the next read, through the repair and the branch it enters.
That guarantee comes from the single read in the link interface, which direct reads of a
changing table would not give.

\paragraph{Dropped sends.}
A send to a neighbor that has left the snapshot the core holds is lost, a
\emph{dropped send}, and is not redirected to a module that took the same table
slot. A consistent snapshot therefore does not prevent dropped sends before the repair.

\paragraph{Revision width.}
The revision is a 32-bit word compared only for inequality, so correctness assumes that
a role never falls behind by a positive multiple of \(2^{32}\) revisions. The Rocq models
use unbounded naturals, the TLA+ models bound revisions, and neither models wraparound.

Other constructs compile conventionally. A \texttt{state} declaration becomes role-local
memory, \texttt{let} bindings become temporaries, and \texttt{Node}
operations become memory-mapped I/O.

\subsection{Send-Target Checking}
\label{sec:send-target-checking}
\label{sec:state-classification}

Unlike application-specific state such as an accumulator, numeric send targets
have a link to the topology that is often visible in the code, for example, through
\texttt{Node.neighborId()}. Send-target checking rejects two forms of
stale reuse that the code shows. One is a destination fixed independently of the
current topology. The other is a persistent destination that survives a topology
change without being refreshed on every repair path for the delivered revision. The check does not
prove that an accepted destination is currently adjacent.
Figure~\ref{fig:software-stack}(c) applies the check to three
variants of the sensor.

For this check, MorphC tracks where each numeric expression may get its value. These
\emph{origins} fall in three categories. A \(\mathsf{current}\) value
comes from the neighbor snapshot of the role's delivered revision.
A \(\mathsf{dynamic}\) value comes at runtime from a source whose
relation to the topology is unknown, such as a received message or the
sender of the last message. A \(\mathsf{fixed}\) value does not change with the
topology. It is either a literal core identifier or persistent state that is not refreshed
on every repair path for the delivered revision. The category names what the analysis
could not establish, not what is known to be stale. A \(\mathsf{fixed}\)
destination may still be adjacent at runtime.

The analysis records the \emph{set} of origins that may reach an expression.
The judgment \(\Gamma;G\vdash a:q\), with
\(q\subseteq\{\mathsf{current},\mathsf{dynamic},\mathsf{fixed}\}\), states that
the analysis assigns \(a\) the origin set \(q\). Here \(\Gamma\) records the origin sets of
local values and \(G\) the persistent locations that the analysis can prove are
refreshed for the delivered revision in \texttt{Reconfigure}. Where control flow
merges, and for expressions that combine several operands, the origin sets are combined by
union, so a value is never treated as current merely because one incoming path
took it from the snapshot. Thus \texttt{Node.neighborId(0)} has origin
\(\{\mathsf{current}\}\), a received identifier \(\{\mathsf{dynamic}\}\), and
both the literal \texttt{7} and an unrefreshed persistent destination
\(\{\mathsf{fixed}\}\).

A persistent location refreshed on every
\texttt{Reconfigure} path with a value whose only origin is \(\mathsf{current}\)
contributes \(\mathsf{current}\) afterwards. Otherwise it contributes
\(\mathsf{fixed}\). An ordinary branch may overwrite such a location before a
later send. At that send, the location contributes \(\mathsf{current}\) together with the
origins of every value an ordinary branch may store in it. MorphC also warns
when a send target may read a location that such an overwrite may set to a \(\mathsf{dynamic}\) value.

MorphC accepts \texttt{Node.sendTo(target,\ldots)} exactly
when \(\mathsf{fixed}\notin q_{\mathit{target}}\), and a rejected send is a
compile error.

MorphC accepts a \(\mathsf{dynamic}\) target, whose origin does not establish
adjacency, because arbitrary runtime values lie outside the stale-reuse
property checked here. A role name, such as a sink's, is treated the same
way. It says that the destination is an endpoint chosen at deployment rather
than a cached neighbor, whereas a literal core identifier says no such thing
and is \(\mathsf{fixed}\). Acceptance therefore means only that MorphC found no
possible \(\mathsf{fixed}\) origin.
An evaluation-only compiler option turns the check off so that rejected
programs can still be run.

\paragraph{Why the check depends on ordered repair.}
An accepted send is safe only if the repair has already run when the branch sends. Consider the
sensor of Figure~\ref{fig:teaser} with \texttt{Reconfigure (n) -> set parent
= Node.neighborId(0)} and \texttt{Send (n) -> Node.sendTo(parent, v)}. Every
repair path refreshes \texttt{parent} from the snapshot, so the target is
\(\{\mathsf{current}\}\) and accepted. If a \texttt{Send} can run after a
topology change but before the repair, as with a notification callback that
the runtime queues behind already scheduled work, or a hand-written loop that
omits the check before \texttt{Send}, the same accepted send reaches the
departed relay. The conclusion ``\texttt{parent} is current'' is true only
after the repair. MorphC's guards make it true at every send, and
removing one guard from the generated code turns exactly this case into sends to
departed relays (Section~\ref{sec:rq2}). For the same reason, the soundness results below assume
Theorem~\ref{thm:ordered-repair}.

Let a persistent location be \emph{fresh} when its value was last written from the
neighbor snapshot of the role's delivered revision. For the core language of the check, we prove the
following two theorems, the second extending the first to ordinary branches that
run between the repair and the send (Section~\ref{sec:send-target-validation}).

\begin{theorem}[Soundness of send-target checking]\label{thm:check-sound}
  (a)~If a destination contains a literal, a local whose origin may be
  \(\mathsf{fixed}\), or a persistent location outside \(G\), the check rejects
  it.
  (b)~Suppose the analysis of \texttt{Reconfigure} starts from an empty \(G\) and
  the repair runs to completion at the delivered revision. Then every location in
  \(G\) is fresh, and every destination built only from neighbor-snapshot reads
  and locations in \(G\) is accepted and fresh.
\end{theorem}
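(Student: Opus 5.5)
The proof is by structural induction over the core-language judgment \(\Gamma;G\vdash a:q\), with one induction for each part. First we fix the core fragment. Numeric expressions are literals, locals bound by \texttt{let}, persistent reads, snapshot reads such as \texttt{Node.neighborId}, dynamic sources, operators, and \texttt{if}. Statements are \texttt{set} and sequencing. Origin sets combine by union at operators and merges.

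For (a) we prove a monotonicity lemma: if a subexpression of \(a\) has \(\mathsf{fixed}\) in its origin set, then so does \(a\). This holds because every compound rule returns the union of the operands' sets, or of the branches' sets for \texttt{if} and \texttt{match}. The base cases then cover the three forms listed in (a). A literal core identifier is assigned \(\{\mathsf{fixed}\}\). A local whose \(\Gamma\)-entry contains \(\mathsf{fixed}\) is assigned that entry. A persistent location outside \(G\) contributes \(\mathsf{fixed}\) by the location rule. By monotonicity, \(\mathsf{fixed}\in q_{\mathit{target}}\), and the acceptance criterion rejects the send.

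For (b) we define a semantic predicate on the pair of a runtime state and the delivered snapshot \(\Delta(r)\). A location is fresh when its last write was a value computed only from reads of \(\Delta(r)\). We then prove an invariant for the analysis of the \texttt{Reconfigure} body.
\begin{enumerate}
\item Every location in the current \(G\) is fresh.
\item Every local whose \(\Gamma\)-entry is \(\{\mathsf{current}\}\) holds a value derived only from \(\Delta(r)\).
\end{enumerate}
At the start of the body, \(G=\emptyset\) and \(\Gamma\) is empty, so the invariant holds vacuously.

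The inductive cases are as follows.
\begin{itemize}
\item \texttt{set} \(x=a\) with \(q_a=\{\mathsf{current}\}\) adds \(x\) to \(G\). Expression soundness gives that \(a\) evaluates from \(\Delta(r)\) only, so \(x\) becomes fresh.
\item Any other \texttt{set} either removes \(x\) from \(G\) or leaves \(x\) outside it.
\item At an \texttt{if} or \texttt{match}, \(G\) is the intersection of the branch outputs, which matches "refreshed on every repair path". Every path of the real execution then ends with each location of the intersected \(G\) fresh.
\end{itemize}
Because the repair runs to completion, the final state is reached, so every location of the final \(G\) is fresh. The snapshot does not change during the repair: reads happen only in the guard and the helper, and Theorem~\ref{thm:ordered-repair} supplies the ordering. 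Hence fresh is measured against the same \(\Delta(r)\) throughout.

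For the destination claim, we take an expression built only from snapshot reads and locations in \(G\). Each leaf is assigned \(\{\mathsf{current}\}\), unions of \(\{\mathsf{current}\}\) stay \(\{\mathsf{current}\}\), and so \(\mathsf{fixed}\notin q\) and the send is accepted. The value is fresh by the invariant together with an expression-level soundness lemma: if \(\Gamma;G\vdash a:\{\mathsf{current}\}\) and the invariant holds, then the value of \(a\) depends only on \(\Delta(r)\).

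The main obstacle is the merge case and its interaction with the semantic definition of fresh. "Last written from the snapshot" is a history property, so the intersection rule must be justified path by path rather than on an abstract joined state. We would first try to make freshness an explicit ghost tag on each location, so that soundness becomes a simple tagged-simulation lemma proved by induction on big-step evaluation of the body. Bounded loops, if they are in the core fragment, are handled the same way: the loop body must preserve the invariant, and \(G\) after the loop is its value at zero iterations intersected with its value after the body.
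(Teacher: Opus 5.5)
\textbf{Gap in the loop case.} Your plan otherwise follows the paper's mechanization. Freshness is a ghost revision tag on each persistent location. Soundness of the refresh analysis is proved by induction on the repair's execution (the paper's \texttt{An\_sound}). Part (a) uses closure of \(\mathsf{fixed}\) under union, and a lemma says a \(\{\mathsf{current}\}\) expression evaluates to a fresh value.

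Once freshness is a tag, the merge case you single out is routine. Induct on the big-step derivation, apply the hypothesis to the branch actually taken, and note that the intersection lies inside that branch's output.

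The step that fails is the loop. Write \(A(c,G)\) for the set your analysis produces for statement \(c\) from input \(G\). Your rule \(G_{\mathit{out}}=G_{\mathit{in}}\cap A(c,G_{\mathit{in}})\) is justified only for zero or one iteration, because \(A(c,\cdot)\) depends on its input: \texttt{set x = y} records \texttt{x} only when \texttt{y} is in \(G\).

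Here is a counterexample. Take a repair that runs \texttt{set y = Node.neighborId(0); set x = Node.neighborId(0)} and then \texttt{for i in 0..2} over the body \texttt{set x = y; set y = 7}. From \(\{x,y\}\) the body yields \(\{x\}\), so your rule keeps \(x\in G\). Yet after the second iteration \texttt{x} holds the literal \texttt{7}. It is not fresh, and a send to it is accepted, contradicting (b).

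The loop case needs an invariant \(G_L\subseteq G_{\mathit{in}}\) with \(G_L\subseteq A(c,G_L)\), followed by an induction on the number of iterations. One choice is the limit of \(G_{k+1}=G_k\cap A(c,G_k)\). Another, as in the paper, keeps only the locations the body preserves. The paper's core language has loops in the repair, so this case cannot be left conditional.

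\textbf{Smaller points.}
\begin{itemize}
\item The paper's fragment also has calls in the repair; a call removes every location it may write.
\item It also has two less conservative rules. A loop that runs at least once whenever the role has a neighbor may record a refresh from its first iteration. A condition or bound fixed whenever the role has a neighbor is assumed to take the path that value selects. Your argument covers only a more conservative analysis. The at-least-once rule needs its own step: the final state comes from the last iteration, which starts where \(G_L\) is fresh. It also rests on that modeling assumption.
\item In (a), your monotonicity lemma ranges over all subexpressions, but your \texttt{if} rule unions only the branches. Either add the condition's origins, or define ``contains'' over value-contributing leaves. Snapshot reads must be leaves anyway, since \texttt{Node.neighborId(0)} is \(\{\mathsf{current}\}\).
\item The snapshot staying fixed during the repair is the hypothesis of (b), not a consequence of Theorem~\ref{thm:ordered-repair}. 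That theorem only places the repair before the send, and the paper treats even that link as a paper argument outside Rocq.
\end{itemize}
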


\begin{theorem}[Preservation]\label{thm:preservation}
Let \(O\) record the origins of the values that ordinary branches may store
into locations in \(G\), where a store guarded by an equality test against a
current expression counts as current, and suppose \(O\) covers every such store. After the
repair and any finite sequence of ordinary branches at the same delivered
revision, including a prefix of the sending branch, the following hold.
(a)~Every destination the check accepts has no \(\mathsf{fixed}\) origin in the
run-time origin tags of the model.
(b)~A destination built only from neighbor-snapshot reads and locations in
\(G\) is still accepted and fresh if \(O\) records only \(\mathsf{current}\)
values for the locations it reads.
\end{theorem}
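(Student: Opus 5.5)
The plan is to prove both parts by induction on a small-step or big-step semantics of the core language, instrumented with run-time origin tags. Every run-time value carries a tag set \(\tau\subseteq\{\mathsf{current},\mathsf{dynamic},\mathsf{fixed}\}\), and each persistent location carries its tag together with a freshness bit: the revision at which it was last written from a snapshot read. Neighbor-snapshot reads produce \(\{\mathsf{current}\}\) and are fresh for the delivered revision. Received values produce \(\{\mathsf{dynamic}\}\), and literals produce \(\{\mathsf{fixed}\}\). Operators and merges take unions of tags. The central invariant is a logical relation: an expression with \(\Gamma;G\vdash a:q\), evaluated in an environment whose locals have tags included in \(\Gamma\) and whose \(G\)-locations have tag \(\{\mathsf{current}\}\) and are fresh, yields a value with tag \(\tau\subseteq q\).

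\textbf{Step 1.} First prove this invariant for expressions by structural induction on the typing derivation. Each case (let, if, op, call, match, set, device calls) holds because the static rule unions at least the origins the dynamic rule can produce. Non-recursive functions are handled by induction on the call depth.

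\textbf{Step 2.} Prove part (a) of Theorem~\ref{thm:check-sound} as a purely syntactic monotonicity lemma. A literal gets \(\{\mathsf{fixed}\}\). A persistent location outside \(G\) gets \(\mathsf{fixed}\) by the classification rule. Because union only grows origin sets, any destination containing such a subterm has \(\mathsf{fixed}\in q_{\mathit{target}}\), so the check rejects it.

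\textbf{Step 3.} Prove part (b) by induction over the \texttt{Reconfigure} body. Start from \(G=\emptyset\) and maintain a path-sensitive invariant: along the current path, the locations in the current \(G\) were written with \(\{\mathsf{current}\}\)-valued expressions after the delivered snapshot was fixed. At merges, \(G\) is the intersection over branches, so it is fresh on every path. The assumption that the repair runs to completion at the delivered revision, which Theorem~\ref{thm:ordered-repair} and the single-read snapshot supply, ensures no intervening delivery invalidates freshness. Acceptance of a destination built only from snapshot reads and \(G\)-locations follows because its static set is \(\{\mathsf{current}\}\). Freshness follows from Step 1.

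\textbf{Step 4.} For Theorem~\ref{thm:preservation}, generalize the invariant. Each \(G\)-location now has a run-time tag included in \(\{\mathsf{current}\}\cup O(\ell)\), and it is fresh whenever \(O(\ell)=\{\mathsf{current}\}\). Prove this is preserved by every ordinary branch step at the same revision, by induction on the length of the sequence and then on each branch's execution. The hypothesis that \(O\) covers every store gives the stores case. For a store guarded by an equality test against a current expression, the argument is that inside the then-branch the stored value equals a fresh snapshot value, so it can be retagged current. Prefixes of the sending branch are covered because the invariant holds at every intermediate state. Part (a) then follows from Step 1, since the static set at the send is the union over \(O\), and acceptance excludes \(\mathsf{fixed}\). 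Part (b) follows as in Step 3.

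The main obstacle is expected to be the guarded-store case of Step 4. There, freshness has to be justified by a value equality rather than by provenance, which requires the tag semantics to relate equal values. I would first try defining freshness extensionally, as "equals some entry of the delivered snapshot", rather than as "last written from a read". If that conflicts with the paper's definition of fresh, I would fall back to instrumenting the equality test so that it transfers the current tag in the dynamic semantics.
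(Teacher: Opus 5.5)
Your decomposition matches the paper's mechanization. It also proves an invariant on the locations in $G$ that a single ordinary branch and any finite run preserve (\nolinkurl{ordinary_sound}, \nolinkurl{runs_sound}), and then derives (a) and (b) at the send. However, two of your invariants would fail as stated.

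The Step~1 relation $\tau\subseteq q$ breaks at a read of a persistent location outside $G$. The analysis assigns such a read $\{\mathsf{fixed}\}$. At run time, though, the location may hold a received identifier tagged $\mathsf{dynamic}$, or a snapshot value from an earlier revision tagged $\mathsf{current}$. A \texttt{let} that binds such a read then violates your hypothesis that locals have tags included in $\Gamma$. Use instead ``$\tau\subseteq q$ or $\mathsf{fixed}\in q$''. This is closed under union and binding, and it yields (a) at once, since an accepted $q$ lacks $\mathsf{fixed}$. Freshness needs its own clause: a value with $q\subseteq\{\mathsf{current}\}$ is computed from the delivered snapshot. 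You need this because a run-time $\mathsf{current}$ tag does not by itself imply freshness. Similarly, your Step~4 invariant must promise freshness whenever $O(\ell)\subseteq\{\mathsf{current}\}$, not only when $O(\ell)=\{\mathsf{current}\}$. Otherwise it says nothing about a location that no ordinary branch writes, such as \texttt{parent} in the sensor, and that is the central case of (b).

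The second problem is the guarded store, and it concerns (a) as well as freshness. Let \texttt{Reconfigure} refresh \texttt{parent}, and let a \texttt{Receive} branch run \texttt{if 7 == Node.neighborId(0) then set parent = 7}. $O$ counts this store as current, so a later \texttt{Node.sendTo(parent, v)} is accepted. Yet under your provenance tags, \texttt{parent} now carries the literal's $\mathsf{fixed}$ tag, which contradicts (a). A proof cannot retag a value that the semantics tags otherwise. The clause ``counts as current'' therefore presupposes a model whose dynamic semantics writes such a store with the tag and revision of the compared current expression. That is your fallback, and it should be part of the model from the start, with the value equality serving only as its informal justification.

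The extensional definition of freshness is not an alternative, for three reasons. It is not the paper's notion, which is ``last written from the delivered snapshot''. It leaves the tag violation above untouched. And it makes (b) false: \texttt{Node.neighborId(0) + Node.neighborId(1)} is built only from snapshot reads and is accepted, but its value need not be any snapshot entry.
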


The check covers revisions delivered after a role starts. A role that
starts with neighbors in range runs on the initial values of its \texttt{state}
declarations until its next topology change, and the check does not examine
those values, so Figure~\ref{fig:software-stack}'s \texttt{state parent = 0} is
accepted. The evaluation starts every role with no neighbor in range
(Section~\ref{sec:setup}), so each first attachment is a revision change that
runs \texttt{Reconfigure}. A runtime can remove the obligation without
changing the contract by committing a first revision before it starts the
cores, so that the first guard of each role runs \texttt{Reconfigure} on the
initial snapshot. MorphSim and the chiplet RTL do not yet do so.

\section{Mechanization and Validation}
\label{sec:validation}

This section describes how we established the results of
Section~\ref{sec:semantics-implementation} and how we connect them to MorphC.
Rocq~\cite{coq} proves Theorems~\ref{thm:ordered-repair}, \ref{thm:check-sound},
and~\ref{thm:preservation} and the parts of Theorems~\ref{thm:coalescing}
and~\ref{thm:continuity} stated below, in about 2{,}400 lines.
Appendices~\ref{app:tla-detail} and~\ref{app:coq-send-target} give the
definitions and lemma names. The proofs are about models, not about MorphC's
Rust source, so for each guarantee we also check the compiler's output and
state what remains trusted. Appendix Table~\ref{tab:claim-evidence} maps each
claim of this section and of Section~\ref{sec:evaluation} to its evidence.

\subsection{Ordered Repair}
\label{sec:dispatch-validation}

\paragraph{Mechanization.}
The Rocq model follows the rules of Section~\ref{sec:operational-semantics},
with the guard applied at every ordinary-branch entry in the same step as that
entry, so the repair of Theorem~\ref{thm:ordered-repair} precedes the branch
within one transition. It bounds neither the number of roles nor the number of commits.
Theorem~\ref{thm:ordered-repair} (\nolinkurl{ordinary_runs_after_repair}) and
the first sentence of Theorem~\ref{thm:coalescing}
(\nolinkurl{reconfigure_at_most_once}) are proved from invariants of the
reachable states. For the second sentence, Rocq proves that an execution with
one delivery and one repair brings \(r\) to \(v\) and records exactly one repair,
for revision \(v\) (\nolinkurl{overlapping_commits_coalesce}). That no execution repairs
an intermediate revision follows from the rules as sketched above and is not
mechanized. Theorem~\ref{thm:continuity}
(\nolinkurl{topology_change_preserves_application_state}) holds for every
single commit and delivery step, where the model's \(M\) is the role's state.
The model also records the handled revision
of Section~\ref{sec:morphc} and proves that it never exceeds the delivered one
(\nolinkurl{acknowledgement_not_ahead_of_delivery}).

A second Rocq model splits the guard, the repair, and the branch entry into
separate steps and lets a commit arrive while a repair or a branch runs. For one
role, it proves that the delivered revision stays fixed during the repair and
that the role enters an ordinary branch only after that repair has completed.
That the snapshot also stays fixed follows from the link interface of
Section~\ref{sec:morphc}, not from the model.

We also wrote the rules in TLA+.
TLC explores all interleavings of commits, deliveries, message arrivals, and
scheduling in six configurations of three to five roles and finds no violation.
It also checks under weak fairness that a pending repair eventually runs, and it finds a
counterexample for each of five variants that each break one part of the
guard (Appendix~\ref{app:tla-detail}).

\paragraph{Checking the implementation.}
The theorems apply to MorphC only if its output has the structure of the model.
A checker in our evaluation harness decodes every RV32I binary and verifies
that an ordinary branch is entered only when a freshly read revision equals the
handled revision, that a different revision leads to the repair first, and that
the handled revision is stored only after that path enters the repair and
stores to every state cell the repair assigns. All 25 role
binaries of the evaluation pass, 17 of which declare a repair. They are the 19
programs, the three roles of the two-hop application, and three
guard-benchmark builds (the two microbenchmark versions and an instrumented
rebuild of the guarded one). To test the checker and our measurements, we
inject four kinds of bugs into the code generator (Table~\ref{tab:mutation}).
They omit a guard, move a guard after its branch, record a revision as handled
without running the repair, or read the revision once per loop instead of
before each branch. The first two match the missing-check and stale-target TLA+ models, and the
last resembles the model that reads the global revision counter. The checker
rejects all 68 resulting binaries. Because a later guard runs an omitted repair
within tens of cycles, we also measure how long a role keeps communicating
after a revision is committed but before its repair is recorded
(Appendix~\ref{app:dispatch-detail}). By this measure or by a change in state,
66 of the 68 binaries behave differently from the correct binary. Finally,
triggering a relay replacement at each of the 548 cycles between the start of
a repair and the next ordinary branch of one program produces no violation in
81{,}652 branch executions, and Section~\ref{sec:fpga} repeats such commits on
an FPGA.

\begin{table}[t]
\caption{Injected guard bugs in the 13 repair-declaring programs on four schedules. A
mutant differs \emph{by interval} if it communicates under an unhandled revision
in more phases or for longer, and \emph{by state} if a handled revision, change
indicator, or destination validity differs. The two criteria overlap (three mutants
differ by both), so the columns do not sum to 66. Both survivors place the guard after its branch.}
\label{tab:mutation}
\scriptsize
\begin{tabular}{lrrrr}
\toprule
Mutation operator & Mutants & By interval & By state & Survive \\
\midrule
Omit one guard site & 29 & 29 & 0 & 0 \\
Guard after its branch & 13 & 11 & 0 & 2 \\
Record handled revision without repair & 13 & 6 & 10 & 0 \\
Read the revision once per loop & 13 & 13 & 0 & 0 \\
\midrule
Total & 68 & 59 & 10 & 2 \\
\bottomrule
\end{tabular}
\end{table}

\paragraph{Trusted components.}
The checker validates each binary, not the compiler. It relies on the addresses of
the branch entries, the dispatch loop, the
handled revision, and the repair entry that MorphC reports, on the list of
state cells the repair assigns in the source, and assumes that array stores do not alias
the handled revision or the stack and that callees preserve callee-saved
registers. The Rocq model runs all cases of one kind
(\texttt{Send}, \texttt{Receive}, or \texttt{Sample}) under one guard, whereas
MorphC guards each case, so a delivery between two cases of the same kind falls
before or after the whole step in the model. Duplicate repair and a handled
revision recorded before delivery are covered only by the TLA+ models, not by
the injected bugs.

\subsection{Send-Target Checking}
\label{sec:send-target-validation}

\paragraph{Mechanization.}
We mechanize the origin rules of Section~\ref{sec:send-target-checking} and the
refresh analysis for a core language with assignments, calls, conditionals, and
loops in the repair, and without calls in ordinary branches. The Rocq execution
labels each persistent location with the revision of
the snapshot from which its value was written from, which gives the notion of freshness in
Theorems~\ref{thm:check-sound} and~\ref{thm:preservation}. In the analysis, an
assignment from the neighbor snapshot records a location in \(G\), a later
assignment from another source removes it, a call removes the locations it may
write, a conditional keeps only locations that both branches record, and a loop
keeps only locations its body preserves. A loop that runs at
least once whenever the role has a neighbor, such as
\texttt{for i in 0..Node.neighborCount()}, may also record
a refresh from its first iteration. The proof of
Theorem~\ref{thm:check-sound} (\nolinkurl{An_sound}) establishes that these
rules record only fresh locations, and
Theorem~\ref{thm:preservation}
(\nolinkurl{reconfigure_then_ordinary_send_is_fresh}) covers the ordinary
branches that follow. Rocq examples check that the rules do not record a
location refreshed on only one branch of a conditional, only inside a loop that
may run zero times, or overwritten afterward.

\paragraph{Checking the implementation.}
For the origin rules, MorphC's decision equals that of a hand translation of
the Rocq check on all 3{,}143 core expressions of depth at most three under one
fixed environment. This test covers neither the refresh analysis nor the
ordinary-branch extension of Theorem~\ref{thm:preservation}.

\paragraph{Trusted components.}
MorphC's Rust implementation is not proved to follow the Rocq rules, and we do
not prove that it computes the complete set of stored values that
Theorem~\ref{thm:preservation} assumes. The origins of local values are given
to the model, not computed. Records, arrays, and references are modeled as persistent locations,
and both the analysis and the model assume that a condition or loop bound whose value is
fixed whenever the role has a neighbor always takes the path that value
selects. Whether a destination obtained at runtime is a neighbor is outside the
theorems. No Rocq theorem connects this model to that of ordered repair, so
the use of Theorem~\ref{thm:ordered-repair} here is a paper argument.

\section{Evaluation}
\label{sec:evaluation}

This section answers four research questions. RQ1 to RQ3, run in the MorphSim
simulator, examine the effect of repair, the coordination and checking that
MorphC supplies, and the execution cost of the generated code. RQ4 checks
whether these results hold on an FPGA.

We evaluate 19 MorphLang programs that we wrote by hand. Thirteen re-implement
the destination-selection logic of a routine from Contiki-NG, TinyOS,
OpenThread, Zephyr, RIOT, Meshtastic, or OONF's OLSRv2 implementation,
including RPL and Trickle routines
\cite{oikonomou2022contiking,levis2005tinyos,openthread,zephyr,baccelli2018riot,meshtastic,rfc7181,rfc6550,rfc6206}.
Each keeps the routine's cached destination, the events that update it, and its
assignments and conditionals in upstream order, and replaces the surrounding
protocol and framework code with simplified state.


\subsection{Research Questions}

Section~\ref{sec:validation} established that every repair runs before the next
ordinary branch for a model of the generated code and checked it on the generated
binaries. The evaluation asks what this buys a program
in practice.

\begin{description}
\item [RQ1] How does application-defined repair affect behavior after topology
  changes?
\item [RQ2] What do MorphC's revision guards and send-target checking add to
  the programmer's repair?
\item [RQ3] What are the execution and code-size costs of the MorphLang/MorphC
  implementation?
\item [RQ4] Do the simulated outcomes and costs hold on an FPGA?
\end{description}

\subsection{Experimental Setup}
\label{sec:setup}
\label{sec:execution-infrastructure}

\paragraph{Platforms.}
\label{sec:lower-layers}
RQ1 to RQ3 run in MorphSim, a cycle-level simulator that executes MorphC's
bare-metal RV32I binaries, with one simulated core per chiplet. It
implements the topology, local memory, communication queues, and revision
interface of Section~\ref{sec:operational-semantics}, and derives neighbor
snapshots from chiplet positions. At fixed cycle counts, a host-controlled
change rebuilds the snapshots and advances the committed revision. Such a
commit may land inside a branch or a guard, and the core sees it only at its
next revision read. Each core's receive queue holds 16 words, as in the
chiplet's hardware, and a word arriving at a full queue is lost. Execution is
deterministic, so every schedule replays exactly. MorphSim's timing is not
calibrated against hardware. We check it against a packet-level reference
model, the RTL of the chiplet in simulation, and QEMU, which establishes
consistency between these simulators but not physical timing
(Appendix~\ref{app:cross-layer}).
The RTL\anon[]{ of mm-node} implements a node in SystemVerilog, with a two-stage
in-order RV32I core and a link peripheral that exposes the neighbor snapshot, a
16-word receive queue, and the revision as memory-mapped registers. One aligned
load of the revision register returns the revision together with its neighbor
table, which is the read of Section~\ref{sec:morphc}. A host port commits new
neighbor tables and logs every send, including dropped ones. We simulate the RTL with Icarus
Verilog, and RQ4 synthesizes the same sources for an FPGA
(Section~\ref{sec:fpga}).

\input{tables/tab-programs}

\paragraph{Programs.}
We selected functions and handlers from released network stacks that cache a peer,
parent, or next hop, and recorded the cached value, its selection code, its
re-selection call sites, and its refresh trigger at a fixed upstream commit.
This purposive set exercises cached-destination behavior. It is not a random
sample. Table~\ref{tab:oss-units} lists the 13 routines, their update triggers,
the derived variants, and the C comparisons. Where upstream
delays an update through a callback, \texttt{Reconfigure} runs it immediately
(Section~\ref{sec:threats}). Appendix~\ref{app:oss-translation-map} lists, per
routine, the preserved decision and the omitted context. The artifact retrieves
all 13 upstream routines at the recorded commits and verifies their hashes and
line locations.

\paragraph{Derived variants.}
The six derived variants are of two kinds. Programs~6, 16, and~17 keep the
update logic and change only when it runs. Programs~6 and~16 move an existing
update body, including its conditions, into \texttt{Reconfigure}, and
Program~17 keeps its timer path and also calls the same candidate scan from
\texttt{Reconfigure}. Programs~14, 18, and~19 come from routines that learn a
destination from received messages or only invalidate it, so there is no upstream
re-selection body to move. They add a \texttt{Reconfigure} branch that selects
from the current neighbor snapshot and leave every other branch unchanged.
Section~\ref{sec:rq1-trigger} explains why the distinction matters.

\paragraph{C baselines.}
For RQ3, we build C executables with the same state, trigger, and
simplifications. Of the nine C baselines, six compile unchanged upstream C function bodies, with small
replacements for framework dependencies. For three nesC or
framework-dependent C/C++ routines, we write C versions by hand and label them
\emph{manual translations}. Five of the eight baselines that meet the matching
criterion below compile byte-exact upstream extracts. The six derived variants have no upstream
counterpart, and we did not build C executables for four further programs.
We report a cost ratio only when both versions make the same routing
decisions. After every topology change, they must store the same destination,
agree on whether it is a current neighbor, change it in the same phases, and
miss a current neighbor in the same phases. We do not require equal delivery or
transmission counts, because the work completed within a fixed cycle budget is
what RQ3 measures.

\paragraph{Topology schedules.}
Every per-program count uses one reference schedule. A pool of three relays
surrounds the chiplet under test. After one initial phase with all relays out of
range, the schedule attaches one relay and then makes five replacements that
keep the neighbor count, each phase lasting 90{,}000 cycles, with seed~0. The
MorphLang and C versions of a program use the same schedule. RQ1 also runs 60
schedules per program variant, namely 20 replacement sequences at three phase
lengths, one of which is the reference schedule. Several factors vary together
across these runs, so they test whether the direction of a paired difference
persists, not the effect of each factor.

\subsection{RQ1 (Effect of Application-Defined Repair)}
\label{sec:rq1}

RQ1 asks whether a program is worse off without its repair. Removing repair
lowers destination validity in ten of the 13 programs that declare one and
never raises it (Table~\ref{tab:rq1-exact}). All seven programs whose repair
comes from upstream code get worse, and six of them go from no phase to all
six phases ending at a non-neighbor.


\paragraph{Method.}
The 13 programs that declare \texttt{Reconfigure} are the seven whose repair
comes from upstream code (Programs~1--3, 7, 9, 12, and~15) and the six derived
variants (Programs~6, 14, and~16--19, Table~\ref{tab:oss-units}). For each, we
compare the program as written with two ablations.
\begin{itemize}
\item \emph{Branch removed.} The whole \texttt{Reconfigure} branch is deleted.
The role then owes no repair, so MorphC also omits its guards. This measures
the effect of declaring repair, including the guards' cost.
\item \emph{Body emptied.} The repair body is replaced with a unit, and the guards
and the handled revision stay. This keeps the guards, although removing
body instructions still changes timing.
\end{itemize}
At each phase end, a phase is a \emph{non-neighbor phase} if at least one relay
is adjacent but the cached destination is not one of them. \emph{Destination
validity} is its complement, so fewer non-neighbor phases is better. We count
phases rather than dropped sends, because MorphC's runtime issues no
transmission to a destination that matches no neighbor.

\paragraph{Result.}

\input{tables/tab-rq1-exact}
\input{tables/tab-rq1-sequence}

Of the seven programs whose repair comes from upstream,
\texttt{openthread-leader} worsens least, from 0/6 to 4/6, because a comparison
in its \texttt{Receive} branch redirects some later sends even without
\texttt{Reconfigure}. Of the six derived variants, three worsen. The other
three, \texttt{meshtastic-nexthop-safe}, \texttt{zephyr-nud-\linebreak[0]reactive},
and \texttt{zephyr-friend-\linebreak[0]reactive}, stay at 0/6 either way,
because a \texttt{Receive} branch also sets the destination to the sender of the
last message, so the messages from the new relay do the repair's work
(Table~\ref{tab:rq1-exact}). The result holds across 20 replacement
sequences that vary the seed, relay-pool size, neighbor degree, and replaced
relay, each at phase lengths of 45{,}000, 90{,}000, and 180{,}000 cycles.
Removing repair never improves destination validity; eight programs are worse
in every run and two (Programs~9 and~17) in most; the three that learn from
received messages are never affected, and
emptying only the body gives the same classification
(Table~\ref{tab:rq1-sequences}).

\paragraph{Derived variants against their originals.}
\label{sec:rq1-trigger}Comparing each derived variant with its original shows
the effect of running an update on a topology change. Programs~6 (from~5) and~17 (from~8) improve.
Program~16 (from~4) stays at 1/6, because \texttt{contiki-rplclassic} selects a new
parent only when earlier feedback has set a flag, and the variant keeps this
condition. The three variants that add a repair are unchanged, because their
originals already learn the destination from received messages.

\paragraph{Result beyond a single cached destination.}
\label{sec:whole-app}\label{sec:aggregator}
Two further programs test repair across two hops and for states other than a destination.
In a sensor-relay-sink application, the sensor and the relay each cache a next
hop and declare a repair. Over eight phases, the relay is replaced from a pool
of three, the sink detaches once, and the relay detaches once. As written,
readings cross both hops in all six phases with a complete chain, 6{,}822 in
total, and deleting either repair yields no deliveries. A variant that learns
each hop only once, on its first nonempty snapshot, delivers only in the two
phases with the original relay.

The second program is an aggregator in the
style of TAG~\cite{madden2002tag} that keeps one reading per neighbor and sends
their sum to a sink. Its repair evicts the readings of senders that are no
longer neighbors, and since each source reports a distinct power of two, the
sum shows exactly which sources it contains. Over 20 schedules of 12 phases
that attach, add, replace, and remove sources, the sum is exact at the end of
all 240 phases as written. Without the repair, it includes a departed source
in 209 phases, including all 144 that follow a departure. Eviction alone is not
enough, because readings a departed source sent before the change are still
queued. The program therefore also accepts a reading only from a current
neighbor, and without that check nine phases still include a departed source.

\paragraph{Answer.}
Repair matters wherever no other code keeps the destination current (ten of 13
programs), and it changes nothing for the three that learn it from received messages. The
two further programs show the same dependence on repair, but the contract fixes
only when repair runs, not whether later branches preserve what it repaired.

\subsection{RQ2 (Revision Guards and Send-Target Checking)}
\label{sec:rq2}

RQ2 asks what MorphC adds beyond the repair the programmer writes, namely the
revision guards and send-target checking.

\paragraph{Method.}
We decode each RV32I binary and count the guards before ordinary branches. For
five of the programs we also wrote C versions whose loop checks the
revision before every ordinary branch and records it only after the repair
returns. To see what a single missing guard costs, we reuse the injected bugs
of Section~\ref{sec:dispatch-validation} and count, on the four schedules of
Table~\ref{tab:mutation}, the sends dropped at a departed relay and the sends
issued under an unhandled revision. For send-target checking, we compile each
of the 19 programs and record whether MorphC accepts every
\texttt{Node.sendTo} destination.

\paragraph{Result.}
MorphC emits 29 guards, one per ordinary branch of the 13 repair-declaring
programs, and none for the six that declare no \texttt{Reconfigure}. The five C
loops contain 11 checks at the same places as MorphC's 11 guards, and all five
meet the matching criterion of Section~\ref{sec:setup}. The C compiler, however,
accepts the loop with any check missing. Removing the first guard of each program
raises the dropped sends from
82 to 165 and the sends under an unhandled revision from 108 to 213 (totals over
the 13 programs and four schedules), with an increase in every program. Over all
29 single-guard omissions, dropped sends
increase in 25, decrease in three, and stay the same in one, and moving a guard
after its branch has no consistent effect. Phase-end validity never changes,
because the extra sends fall in the tens of cycles between a commit and the
repair.

Send-target checking rejects ten programs. Six defer their update to a timer or
message, which the check is meant to reject. The other four declare
\texttt{Reconfigure} but fail for three reasons.
\texttt{contiki-rplclassic-\linebreak[0]reactive} refreshes its parent on only
one branch of a condition, which also causes its one stale phase.
\texttt{tinyos-ctp-\linebreak[0]reactive} keeps its previous parent while that
parent remains a neighbor. \texttt{zephyr-nud-\linebreak[0]reactive} and
\texttt{zephyr-friend-\linebreak[0]reactive} have a \texttt{Sample} branch that
may reset the destination to the literal \texttt{0}, which never happens in our
schedules. Four of the rejected programs and none of the nine accepted ones end
a phase at a non-neighbor. Taking as positive the destinations that the authors labeled as able to go
stale before running the checker, the decisions have precision 0.70 and recall
1.00. The three false positives are the ctp and the two zephyr programs. MorphC also warns about
the \texttt{Receive} branches of \texttt{meshtastic-nexthop-safe} and
\texttt{openthread-leader}, which overwrite the destination with a
\(\mathsf{dynamic}\) value.

\paragraph{Result on a reduced reproduction of a released defect.}
\label{sec:ordering-reduction}
Contiki-NG issue \#877, one of the four ordering reports of
Section~\ref{sec:field-evidence}, shows a missing check in released code. When
RPL switches parent, a deferred callback sets the TSCH time source to the new
parent, but a keepalive timer that sends to the time source runs first on the
same tick. We reduced the two handlers to C and to MorphLang, with the parent
switch in \texttt{Reconfigure} and the tick and keepalive periods scaled to the
simulator. Over 384 parent changes (32 seeds, 300 switches,
and 84 losses), the C version sends a keepalive to the departed parent after
the node has been notified in 60 changes, and the MorphLang version in none.
Swapping the two C handlers also fixes it, but only as a convention at that
use site.

\paragraph{Answer.}
MorphC places the same one check per branch that a careful C programmer would
write, so its contribution is coverage, not a cheaper check. Omitting the first
guard of each program increases sends to departed relays in all 13; the reduced
Contiki-NG reproduction shows the same failure, and send-target checking
rejects every program whose destination goes stale in our schedules, with three
conservative rejections.

\subsection{RQ3 (Execution and Code-Size Cost)}
\label{sec:rq3}

\begin{wrapfigure}{R}{.395\linewidth}
\includegraphics{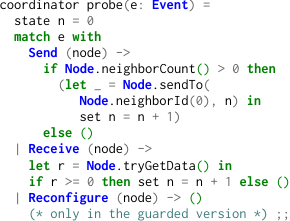}
\caption{The guard-cost microbenchmark. The unguarded version omits the last
branch, so MorphC emits no guards for it.}
\label{fig:guard-bench}
\Description{A MorphLang coordinator named probe with a counter state n. Its
Send branch sends n to the first neighbor and increments it; its Receive branch
polls incoming data and increments n, and an empty Reconfigure branch appears
only in the guarded version.}
\end{wrapfigure}

RQ3 asks about the cost of the guards and the generated code.

\paragraph{Method.}
For the guards alone, we use the microbenchmark in
Figure~\ref{fig:guard-bench}, which we wrote for this purpose and which is not
one of the 19 programs. With one neighbor, no incoming messages, and no topology
change after a 2{,}000-cycle warm-up, we time 1{,}000 loops of both versions.
This measures the path where the revision is unchanged,
not the repair itself. We time the same loop in C (\texttt{gcc -O2}) unguarded,
with an inline check before each branch, and with a runtime dispatcher in a
separate translation unit that checks the revision and calls each handler
through a function pointer, the library alternative of
Section~\ref{sec:design-requirements}.
For whole programs, we use nine of the 19 programs that have a C version,
Programs~1--5, 7, 10, 11, and~15 of Table~\ref{tab:oss-units}, and compare the
two versions when both meet the matching criterion of Section~\ref{sec:setup}. We measure cycles
per issued link transmission, counting sends to departed neighbors and
excluding \texttt{sendTo} calls that find no neighbor, over the whole reference
schedule including the initial phase. Code size counts executable code and
initialized data, not \texttt{.bss} or linker padding. The C programs are
compiled with \texttt{riscv64-elf-gcc}~16.2.0 at \texttt{-O2}, so the ratios
depend on that compiler as well as on the source.

\paragraph{Result.}
The guarded loop takes 88 cycles against 76, so two guards cost 12 cycles,
15.8\% of the unguarded loop, and 144 bytes. In C, an inline check before each
branch adds 14 cycles to a 44-cycle loop, and the runtime dispatcher adds 45.
Only these differences are comparable across compilers, and link-time
inlining could narrow the dispatcher's gap.
Seven of the nine C versions meet the matching criterion as extracted
(Figure~\ref{fig:rq3-cost}, which also includes the OONF pair below).
In all seven, MorphLang takes 1.34 to 2.07
times C's cycles per transmission. Beyond the guards, the difference is consistent
with MorphC's simple code generation, which keeps expression results in
\texttt{a0}, reloads scalar state on each access, and has no peephole pass.
Ordered repair requires none of these choices. We did not isolate these
costs. MorphLang binaries are 0.48 to
0.66 times the size of the five upstream-C binaries, which include data
structures and control flow that the programs omit, and 1.13 to 1.40 times the
three manual translations, whose scope is closer. The \texttt{zephyr-friend}
pair is excluded because its two versions update the destination in different
phases.

\input{figures/fig-rq3-cost}

\paragraph{Result for the OONF pair.}
The C version of \texttt{oonf-olsrv2} fails the matching criterion under
upstream's 1{,}000\,ms rate limiter, which is longer than the whole run. Our timer
shim turns off the limiter without changing the upstream code, and both versions
drop the recomputation on message receipt, whose topology database they omit.
Recomputing once per topology change, MorphLang then takes 0.85 times C's
cycles, and C's count includes upstream routing bookkeeping that the MorphLang
version omits.

\paragraph{Answer.}
On the unchanged-revision path, the guards cost about as much as an inline C
check and less than a runtime dispatcher, and the same 12 cycles separate the
two loops on an FPGA (Section~\ref{sec:fpga}). Seven of the eight matched
programs take 1.34 to 2.07 times C's cycles, which is consistent with code
generation that the contract does not
require, though we did not isolate these costs. Code is 0.48 to 0.66 times the
size of upstream C and 1.13 to 1.40 times that of the manual translations.

\subsection{RQ4 (Results on an FPGA)}
\label{sec:fpga}

RQ4 asks whether the ordering, timing, and outcomes seen in simulation also
hold on an FPGA, where host-injected commits are not synchronized with the core.

\begin{wrapfigure}{R}{.375\linewidth}
\includegraphics{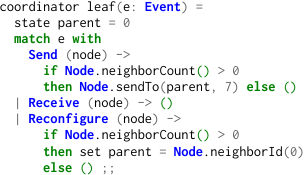}
\caption{The leaf coordinator used in RQ4.}
\label{fig:leaf}
\Description{A MorphLang coordinator named leaf with state parent. Its Send
branch sends the value seven to parent when it has a neighbor; its Receive
branch does nothing, and its Reconfigure branch sets parent to the first
neighbor when there is one.}
\end{wrapfigure}

\paragraph{Method.}
The board holds a single module. We run the RTL of Section~\ref{sec:setup} on a
Digilent Basys~3 board (Xilinx xc7a35t) at 30\,MHz.
There is no second chiplet and no radio. A host PC plays the role of the
relays. Over the board's UART it writes the chiplet's neighbor table, for example
replacing neighbor~5 with neighbor~9, and then commits the change, which
advances the revision as a real topology change would. In the runs that
measure repair windows, the core only sends and receives no messages, and
an on-chip stamp unit records, in core cycles, each commit and the next eight
sends. In the runs of the 19 programs, every neighbor in the committed table
sends periodic beacons generated on the chip, which are the only messages the
core receives, and the board reports a summary of each phase. Because the host is not
synchronized with the core, the UART delay decides where in the loop a commit
lands, and it is not part of any measured interval.
We run three workloads, each compiled by
MorphC and unchanged from simulation (Appendix~\ref{app:basys3}). The first is
the leaf coordinator of Figure~\ref{fig:leaf}, which we wrote for this purpose
and which is not one of the 19 programs. Its \texttt{Reconfigure} branch
re-selects the parent, so the time from a commit to its first send to the new
parent is the repair window. Each of 20 runs attaches one neighbor and then
swaps it 63 times between two cores. The
second is the guarded and unguarded microbenchmark of
Figure~\ref{fig:guard-bench}. The third is each of the 19 programs under the RQ1
reference schedule, which the host replays. These runs therefore test the RTL of one module under
unsynchronized commits, not communication between chiplets or radio behavior,
and not MorphSim's timing.

\paragraph{Result.}
All 9{,}231 sends that the leaf issues under a new revision go to the neighbor
that the commit installed. After 1{,}009 of the 1{,}280 commits, the branch that
was already running issues one more send under the old revision, which the
contract permits because that branch has already passed its guard. Of these,
978 go to the departed neighbor and are dropped, and
31 leave in the commit's own cycle. Every repair window, from a commit to the
first send of the new revision, equals the value that RTL simulation gives for
the loop position where the commit lands, 72 to 114 cycles over all 43
positions, and the loop period stays 43 cycles (Figure~\ref{fig:basys3-windows}).
Here the loop position \(p\) is the number of cycles from the previous send to
the commit. The window falls by one cycle for each cycle that \(p\) grows,
because the commit lands that much closer to the next send. It jumps by one
loop period, 43 cycles, between \(p=10\) and \(p=11\), because that is where
the guard reads the revision. A commit that lands before the read is repaired
in the same iteration, giving \(82-p\) cycles, and one that lands after it
waits for the guard of the next iteration, giving \(125-p=82-p+43\) cycles. The
guarded and unguarded microbenchmarks keep loop periods of 88 and 76 cycles in
all 2{,}880 measured topology changes, the same 12-cycle difference as in
MorphSim.

\input{figures/fig-basys3-windows}

\paragraph{Result for the evaluation programs.}
In all 114 phases of the 19 programs, the board agrees with RTL simulation of
the same script on whether anything was sent, the last destination, whether it
was dropped, and whether anything was delivered. RTL simulation in turn agrees
with MorphSim in 110 of the 114 phases. In the other four, the poll timer of
\texttt{contiki-rplclassic} and \texttt{tinyos-ctp-poll} fires one phase earlier
on the RTL, which moves their single non-neighbor phase without changing its
count. The ablated variants were not run on the board.

\paragraph{Answer.}
On one chiplet's RTL, the ordering, repair windows, guard cost, and
reference-schedule outcomes seen in simulation also hold under host-injected
commits that are not synchronized with the core.

\subsection{Threats to Validity}
\label{sec:threats}

\emph{Construct validity.} Our measures capture whether a destination is
current and how much work is done, not whether a protocol behaves correctly as
a whole. RQ1 scores phase-end destinations, which a later branch can hide, so
RQ2 adds the sends between a commit and its repair. Removing the repair branch
also removes the guards and changes timing, which the body-emptied ablation
controls only partly. Cycles per link transmission measure throughput, not the
latency of a branch, and the guard microbenchmark
(Figure~\ref{fig:guard-bench}) measures only the path on which the revision
is unchanged. The guard count describes MorphC's guards, not programmer effort
or a lower bound for other runtimes (Section~\ref{sec:design-requirements}).

\emph{Internal validity.} The results depend on the simulator, the schedules,
and the C compiler. Agreement with a packet-level reference, RTL simulation,
and the FPGA run reduces but does not remove shared errors in the simulator and
drivers. MorphSim makes link changes and received words visible immediately.
On 12 of the 60 schedules per variant, we delay revision exposure by 90 to
9{,}000 cycles or word delivery by 1 to 100 cycles, one factor at a time. This does
not change which programs improve without repair and shifts the interval of
Section~\ref{sec:dispatch-validation} by at most one branch length. The
receive queue holds 16 words, as in the chiplet's hardware. With an unbounded
queue, words from a departed relay back up across a change. The three
programs that learn their destination from received messages then end two, three, and four
phases at a non-neighbor as written, against three each without repair. The
other ten programs keep their counts. The schedules vary several factors
together, so they show directions rather than the effect of each factor. The
cost ratios of RQ3 also depend on \texttt{riscv64-elf-gcc}~16.2.0.

\emph{External validity.} The programs are our own re-implementations of
selected routines, so the results show what the contract does for these
patterns, not how often they occur or how MorphC would fare on complete
applications. The authors selected the routines and what to omit, wrote the
programs, the two further applications, and the send-target labels, and had no
second annotator. Moving a scheduled callback into \texttt{Reconfigure} runs the
update earlier, so RQ1 evaluates the selected handler, not the upstream
scheduler's delay. Three C baselines are manual translations, and MorphC is not
applied to the original C, C++, or nesC code. The measurements cover cached
numeric destinations and one aggregator, whose repair needed an added
check that accepts readings only from a current neighbor. They do not cover timers or
controller state. The \#877 reproduction scales
the tick and keepalive periods to the simulator, so its rate of 60 in 384 is
not upstream's, and the field reports show that the problem exists, not how
often.

\emph{Scope of the guarantees.} The proofs establish the order of repair, not
that a repair is correct or that MorphC refines the models
(Section~\ref{sec:validation}). Send-target checking covers only numeric
destinations and revisions delivered after a role starts, gives no adjacency guarantee
for runtime-obtained targets, and cannot see state such as OpenThread's global
partition identifier that lies outside local revisions.

\section{Related Work}
\label{sec:related-work}

Table~\ref{tab:related-comparison} compares the mechanisms below by
who writes the update, what orders it before later communication, and
whether the mechanism records revisions and checks numeric destinations.

\input{tables/tab-related-comparison}

\emph{Event-driven languages.} nesC structures asynchronous tasks and events,
where events may preempt tasks and other events, and P checks event-driven state
machines whose programmer can defer events~\cite{gay2003nesc,desai2013p}.
Neither ties an application callback to a per-role topology revision with a
compiler check before each later communication branch.

\emph{Protocol and aggregate languages.} Choral, Pirouette, HasChor, dynamic
session types, and process calculi check interactions as participants or
protocols change
\cite{giallorenzo2024choral,hirsch2022pirouette,shen2023haschor,denielou2011dynamic,
harvey2021ensembles,dallapreda2017dynamic,lopez2016available,kouzapas2011dynamic}.
Choreographic theories and polymorphic choreographies generalize who
participates \cite{cruzfilipe2023theory,graversen2024polychor}. Choreographic
languages guarantee communication safety and deadlock freedom, but their proofs
assume reliable communication~\cite{giallorenzo2024choral}, which a permitted
send to a neighbor retained from an earlier graph does not satisfy. Proto, field calculi,
and Protelis define
computations that converge as neighborhoods change
\cite{viroli2013proto,viroli2014fields,viroli2019aggregate,pianini2015protelis},
and macroprogramming systems such as Regiment program a sensor network as a
whole while the runtime maintains neighborhoods
\cite{newton2007regiment,mottola2011programming}. Both hide the per-node
update that MorphLang exposes.

\emph{Languages for modular robots.} Shape-changing ensembles have been
programmed declaratively. Meld derives facts from rules over facts that
include each module's current neighbors, which the runtime maintains, and LDP
expresses programs as predicates over small connected groups of modules that
are matched as the ensemble changes shape
\cite{ashleyrollman2007meld,derosa2008ldp}. Both recompute derived state from
the current neighbors. MorphLang instead keeps imperative persistent state,
as existing network stacks do, and orders the application's own repair of it.

\emph{Membership and adaptation.} Virtual synchrony gives all members of a
process group the same order of message deliveries and membership changes, and
ULFM reports failures per operation and lets the surviving processes revoke,
shrink, and agree on a communicator
\cite{birman1987virtual,birman1993process,bland2013ulfm,mpi41}.
MorphLang's contract
instead orders repair within one role, without a global barrier or a common
view. Erlang supervision, AmbientTalk, ContextErlang, EventCJ, Concerto, and
related systems make failure handling or adaptation explicit but leave the
state update and its timing to the application
\cite{kephart2003autonomic,erlangOtpSupervision,vancutsem2014ambienttalk,
salvaneschi2015contexterlang,kamina2011eventcj,chardet2021concerto,
arfi2024concertod,ahrens2022reconfig}.

\emph{Reactive programming.} FRP and dataflow languages recompute derived
values when their inputs change, and glitch freedom ensures that no
computation observes a mix of values derived from old and new inputs
\cite{cooper2006embedding,bainomugisha2013survey}. The ordered-repair
contract is a local analog for topology. Because the dependent state cannot
be recomputed from a dataflow graph, the application writes the repair and
MorphC orders it.

\emph{Dynamic software updating.} DSU systems replace a running program's code
and transform its state with programmer-written transformers at update points.
Version consistency restricts when an update may take effect
\cite{hicks2005dsu,neamtiu2006ginseng,hayden2012kitsune,neamtiu2008contextual}.
\texttt{Reconfigure} resembles a state transformer, and the guards resemble
update points, but a topology change leaves the code unchanged and recurs, so
MorphC coalesces pending changes per role. Read-copy update and epoch-based
reclamation also track when readers stop using an old version
\cite{mckenney1998rcu,fraser2004epoch}.

\emph{Flow analyses.} Origin inference tracks every source a value may come from, in
the style of information-flow type systems
\cite{volpano1996sound,sabelfeld2003ifc}, with \(\mathsf{fixed}\) as the
label that rejects a send, and definite refresh must hold on every path, like
definite assignment. Unlike a security label, the origin of a persistent
location depends on the delivered revision.

\emph{Simulators.} TOSSIM simulates TinyOS applications at the level of events
and packets~\cite{levis2003tossim}, Cooja adds emulation of node
hardware~\cite{osterlind2006cooja}, Avrora runs AVR binaries cycle by
cycle~\cite{titzer2005avrora}, and gem5 models processors but not a changing
network~\cite{binkert2011gem5}. MorphSim runs RV32I binaries cycle by cycle
like Avrora, and its topology changes are deterministic events that advance a
per-core revision at any chosen cycle.

\section{Conclusion}
\label{sec:conclusion}

We introduced morph-aware computing, a programming model for systems whose
computing modules keep running and keep their state while the topology changes.
The model divides the work. Applications state what to repair in a
\texttt{Reconfigure} branch of MorphLang, and the compiler decides when the
repair runs.

MorphC compiles MorphLang into bare-metal RISC-V RV32I code. It guards every
ordinary branch so that a role runs its repair for the latest revision it has
seen before any other branch, and it rejects numeric send targets that may be
literal or unrefreshed. For a model of the generated code, we proved in Rocq
that this ordered-repair contract holds for any number of roles and topology
changes, that each delivered revision is repaired at most once, and that
send-target checking is sound for a core language.

In evaluations of 19 programs derived from seven network stacks, removing the
repair reduced destination validity in ten of the 13 programs that declare one.
The guards cost 12 cycles per loop in a microbenchmark. A reduced reproduction
of a released Contiki-NG defect showed the failure that the guards prevent, and
an FPGA implementation of one module agreed with RTL simulation.

The order of updates can be part of a language's semantics rather than a
convention of its runtime. Coverage then follows from program structure, since
every ordinary branch is visible to the compiler, and because repairs run
first, a static analysis can rely on their outcomes.

We see three directions for future work. First, we aim to broaden MorphC's
checks to other stale state, such as table entries for departed neighbors and
messages queued before a change (Section~\ref{sec:aggregator}). Letting
programmers mark such tables and queues would let MorphC require the repair to
clear them. Committing an initial revision before the cores start would extend
the contract to the initial state. Next, we intend to connect our proofs to the
compiler by proving in Rocq that MorphC's guard insertion always yields the
structure that the checker now validates for each binary. Finally, we plan to
evaluate our approach on many distributed modules that connect and disconnect
over wireless links, rather than on a single module driven by a host. This also
requires a method for deploying compiled role binaries to many modules, which
MorphC does not yet provide.

\section*{Data-Availability Statement}

The artifact contains the implementation, fixed upstream commits,
evaluation inputs and scripts, archived field reports, formal developments,
RTL/testbench sources, and the bitstream flow, host scripts, and raw stamps of
the FPGA run (Section~\ref{sec:fpga}).
\ifarxiv The artifact is not yet publicly available. We plan to release it
with an archival DOI and will update this entry when it is available.\else We
intend to submit the artifact for Artifact Evaluation.\fi

\section*{Use of Generative AI Tools}

We used Claude, a large language model by Anthropic, to proofread and polish the
text of this paper and to assist in implementing parts of the artifact's code
and scripts. The authors reviewed and edited all generated content and take full
responsibility for the paper and the artifact.

\bibliographystyle{ACM-Reference-Format}
\bibliography{reference}

\ifdefined\splitappendix\else
  \appendix
\input{appendix}
\fi

\end{document}

%% file: preamble.tex
\usepackage{enumitem}
\usepackage{booktabs}
\usepackage{tabularx}
\usepackage{pifont}
\usepackage{hyperref}
\usepackage{xurl}      
\usepackage{fvextra}
\newif\ifarxiv\arxivtrue
\usepackage{xcolor}
\usepackage{wrapfig}
\usepackage{placeins}
\usepackage{tikz}
\usepackage{pgfplots}
\pgfplotsset{compat=1.18}
\usetikzlibrary{arrows.meta,positioning,calc,fit,backgrounds,decorations.pathmorphing}

\colorlet{figink}{black!82}
\colorlet{figrule}{black!30}
\colorlet{figmute}{black!64}
\colorlet{figaccent}{blue!64!black}
\colorlet{figok}{green!45!black}
\colorlet{figbad}{red!62!black}
\definecolor{figrq4sim}{HTML}{2A78D6}
\definecolor{figrq4fpga}{HTML}{EB6834}
\definecolor{figrq3cyc}{HTML}{2A78D6}
\definecolor{figrq3size}{HTML}{EB6834}
\definecolor{pipefill}{HTML}{F3F1EA}
\definecolor{pipeend}{HTML}{E7E4DB}
\definecolor{chiptop}{HTML}{F4F2EC}
\definecolor{chipside}{HTML}{E3E0D8}
\tikzset{
  sedge/.style={draw=figmute, line width=0.5pt},
  shair/.style={draw=figrule, line width=0.4pt},
  scoil/.style={draw=figrule, line width=0.3pt},
  tw/.style={anchor=base west, font=\tiny, inner sep=0pt, outer sep=0pt},
  tc/.style={anchor=base, font=\tiny, inner sep=0pt, outer sep=0pt},
  sw/.style={anchor=base west, font=\scriptsize, inner sep=0pt, outer sep=0pt},
  swave/.style={-{Latex[length=1.3mm]}, figaccent, line width=0.6pt, decorate,
    decoration={snake, amplitude=0.4mm, segment length=1.4mm,
                pre length=0.3mm, post length=0.9mm}},
  swavep/.style={figaccent, line width=0.6pt, decorate,
    decoration={snake, amplitude=0.4mm, segment length=1.4mm,
                pre length=0.3mm, post length=0.3mm}},
  sghost/.style={draw=black!45, line width=0.45pt,
                 dash pattern=on 1.8pt off 1.3pt},
  stepnum/.style={draw=figink, circle, inner sep=0.9pt,
                  font=\tiny\bfseries, text=figink},
  codecallout/.style={draw=figaccent, fill=figaccent!6, rounded corners=1pt,
                      inner xsep=2pt, inner ysep=0.7pt,
                      font=\sffamily\tiny\bfseries, text=figaccent},
}

\tikzset{
  ftitle/.style={font=\scriptsize\bfseries, text=figink, anchor=base west,
                 inner sep=0pt},
  fhair/.style={draw=figrule, line width=0.4pt},
  fbody/.style={font=\scriptsize, text=figink, anchor=north west, align=left,
                inner sep=0pt},
  flabel/.style={font=\tiny, text=figmute, anchor=base, inner sep=0pt,
                 align=center},
  fnote/.style={font=\tiny, text=figmute, anchor=north west, align=left,
                inner sep=0pt},
  fkey/.style={font=\scriptsize, text=figmute, anchor=base west, inner sep=0pt},
  farrow/.style={-{Latex[length=1.4mm]}, line width=0.5pt, figmute},
  die/.style={draw=figink, line width=0.5pt},
  hotdie/.style={draw=figaccent, line width=0.9pt},
  gonedie/.style={draw=figrule, line width=0.4pt, dashed},
  coil/.style={draw=figrule, line width=0.3pt},
  logic/.style={draw=figmute, line width=0.3pt, inner sep=0pt,
                font=\scriptsize, align=center, text=figink},
  link/.style={-{Latex[length=1.3mm]}, line width=0.6pt, figaccent, decorate,
               decoration={snake, amplitude=0.28mm, segment length=1.05mm,
                           pre length=0.35mm, post length=0.85mm}},
}

\DeclareRobustCommand{\checkmark}{\ding{51}}

\ifdefined\splitappendix
  \usepackage{xr-hyper}
  \makeatletter
  \@for\@tempa:=-1,0,1,2,3,4,5\do{%
    \global\expandafter\let\csname r@tocindent\@tempa\endcsname\relax}
  \global\expandafter\let\csname r@TotPages\endcsname\relax
  \makeatother
\fi

\newcommand{\papertitle}{%
  Morph-Aware Computing: Compiler-Ordered Updates of Program State after Topology Changes}

\author{Yusuke Izawa}
\affiliation{
  \institution{Tokyo Metropolitan University}
  \city{Tokyo}
  \country{Japan}
}
\affiliation{
  \institution{Heinrich-Heine-Universit\"at D\"usseldorf}
  \city{D\"usseldorf}
  \country{Germany}
}
\author{Junichiro Kadomoto}
\affiliation{
  \institution{The University of Tokyo}
  \city{Tokyo}
  \country{Japan}
}

%% file: figures/fig-teaser-float.tex
\begin{figure}[t]
\centering
\includegraphics[width=0.75\columnwidth]{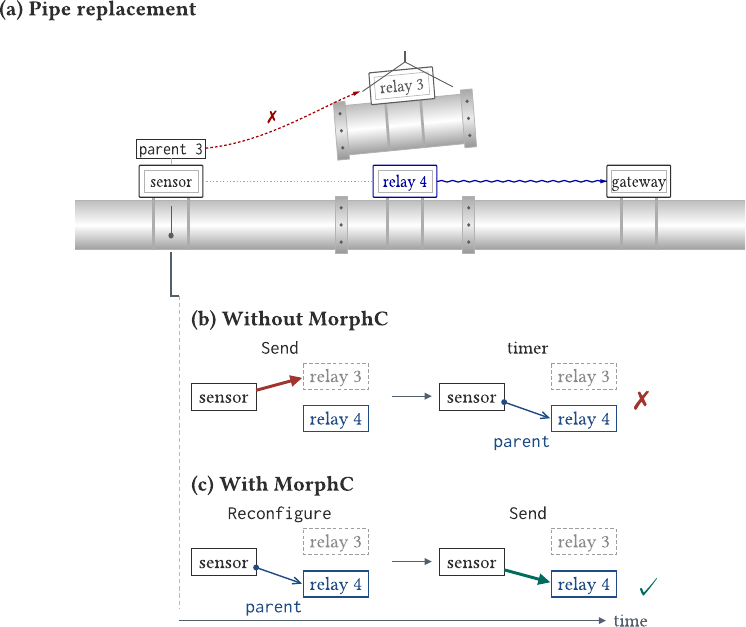}
\caption{An illustrative pipe-maintenance scenario (not to scale). (a) Workers
replace a spool. Relay~3 leaves, relay~4 arrives, and the still-running sensor
keeps \texttt{parent=3} although its only neighbor is relay~4. (b, c) The same
sensor after the swap, programmed without and with MorphC, on one time axis.
Thin blue arrows show the stored \texttt{parent}, thick arrows a send. Without
MorphC a send can precede the timer-triggered update of \texttt{parent}, the
trigger that 4 of the 13 upstream routines use. With MorphC,
\texttt{Reconfigure} runs before \texttt{Send}.}
\label{fig:teaser}
\Description{A corrosion sensor, relay four, and a gateway remain clamped to
a pipe. Above them, a removed spool hangs from a lifting sling, still carrying
relay three. A dashed red arrow from the sensor's retained parent-three
register points to that departed relay. The sensor's new one-hop neighbor is
relay four, and the count remains one. Below, two rows start from the sensor after the swap and share a time axis.
Each step is drawn as a small copy of the scene, with the sensor, departed relay
three in dashed gray, and relay four. In row b, without MorphC, a thick red send arrow goes to
departed relay three, and only afterwards a timer-driven update moves the thin
parent arrow to relay four. In row c, with MorphC, Reconfigure first moves the parent arrow
to relay four, and the thick green send arrow then reaches relay four. This is an illustrative use case, not a reported deployment.}
\end{figure}

%% file: figures/fig-stack-compact.tex
\begin{figure}[t]
\centering
\def\morphstackcompact{}
\includegraphics{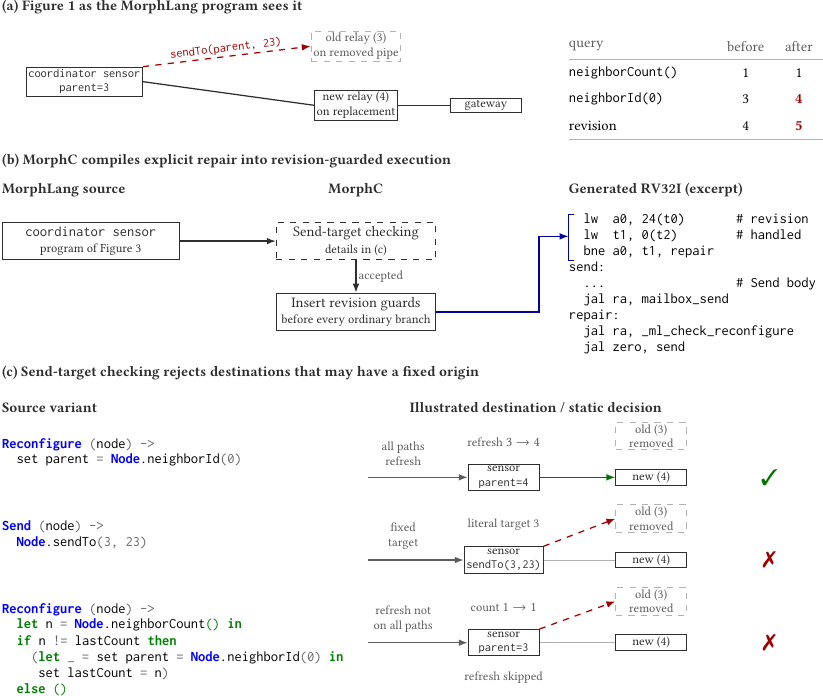}
\caption{Figure~\ref{fig:teaser}'s sensor through MorphC. (a) The replacement
as the program observes it. The neighbor count is unchanged, but the stored
\texttt{parent} names the departed relay. (b) MorphC checks send
targets, then inserts a revision guard before every ordinary branch. On a
mismatch, the helper runs the repair and records completion before returning to
\texttt{send}. (c) The
check on three variants of (b). It inspects source-level origins, not the
pictured topology. Arrows are send targets, and acceptance (\ding{51}) does
not prove adjacency.}
\label{fig:software-stack}
\Description{Three panels show MorphC applied to the sensor of Figure 1.
Panel a shows the replacement as a network in MorphLang terms, namely the
coordinator sensor whose parent is three, a dashed send arrow
labeled sendTo parent to the old relay three on the removed pipe, and a link
to the new relay four and the gateway. A note states that
Node.neighborCount stays one, Node.neighborId of zero changes from three to
four, and revision four to five makes Reconfigure pending.
Panel b shows a compilation flow. The MorphLang sensor program of Figure 3 goes first
through static send-target checking (detailed in panel c). An
accepted program then goes to revision-guard insertion, which leads to
generated RV32I. The highlighted loads and conditional branch detect an
unhandled revision. The repair helper reads the revision and snapshot again, runs application
repair, and records completion before returning to Send. Panel c shows three
source variants, each paired with a destination sketch using the same
positions for the sensor and old and new relays. Unconditional refresh
changes parent from three to four and points to the new relay. It is
accepted. A literal destination of three points to the removed relay and is
rejected. Count-based refresh keeps parent three because the count remains
one. Its arrow also points to the removed relay and it is rejected. A check
mark denotes acceptance and a cross denotes rejection. Plain lines show
neighbor links, and arrows show send targets.}
\end{figure}

%% file: tables/tab-programs.tex
\begin{table}[t]
\caption{The 13 upstream functions or handlers and their MorphLang programs.
Numbers index all 19 programs. The lower part lists the six derived variants
and the program each is written from.
\emph{Rec.} marks programs that declare \texttt{Reconfigure} as written.
\emph{Der.} gives the derived variant, which updates on a topology revision and is
named with the suffix \texttt{-reactive} (\texttt{-safe} for program~14).
\emph{C} gives the RQ3 baseline, upstream C unchanged (U) or a manual translation (M).}
\label{tab:oss-units}
\scriptsize
\begin{tabularx}{\columnwidth}{@{}rl>{\raggedright\arraybackslash}X>{\raggedright\arraybackslash}p{0.235\columnwidth}ccc@{}}
\toprule
\# & Program & Upstream function or handler & Upstream update trigger & Rec. & Der. & C \\
\midrule
1 & \texttt{contiki-rpllite} & Contiki-NG \texttt{rpl\_neighbor\_select\_best} & neighbor-removal callback & \checkmark & -- & U \\
2 & \texttt{contiki-rpllite-repair} & Contiki-NG \texttt{rpl\_local\_repair} & newer-version packet & \checkmark & -- & U \\
3 & \texttt{contiki-trickle-reset} & Contiki-NG \texttt{rpl\_timers\_dio\_reset} & local-repair call & \checkmark & -- & U \\
4 & \texttt{contiki-rplclassic} & Contiki-NG \texttt{rpl\_recalculate\_ranks} & feedback flag and timer & -- & 16 & U \\
5 & \texttt{tinyos-ctp-poll} & TinyOS CTP \texttt{updateRouteTask} & route/beacon timer & -- & 6 & M \\
7 & \texttt{openthread-routertable} & OpenThread \texttt{RouterTable::UpdateRoutes} & route-update packet & \checkmark & -- & M \\
8 & \texttt{openthread-parentsearch} & OpenThread \texttt{ParentSearch::HandleTimer} & parent-search timer & -- & 17 & -- \\
9 & \texttt{openthread-leader} & OpenThread \texttt{HandleAdvertisementOnFtd} & leader advertisement & \checkmark & -- & -- \\
10 & \texttt{zephyr-friend} & Zephyr BT Mesh \texttt{friend\_timeout} & friendship timeout & -- & 19 & U \\
11 & \texttt{zephyr-nud} & Zephyr \texttt{ipv6\_nd\_reachable\_timeout} & reachability timer & -- & 18 & M \\
12 & \texttt{riot-gnrcrpl} & RIOT \texttt{\_gnrc\_rpl\_find\_\allowbreak preferred\_parent} & parent-removal callback & \checkmark & -- & -- \\
13 & \texttt{meshtastic-nexthop} & Meshtastic \texttt{NextHopRouter::getNextHop} & ack/reply packet & -- & 14 & -- \\
15 & \texttt{oonf-olsrv2} & OONF OLSRv2 \texttt{\_run\_dijkstra} & metric/change/expiry callback & \checkmark & -- & U \\
\midrule
\multicolumn{7}{@{}l}{\emph{Derived variants}, each written from the program in the third column} \\
6 & \texttt{tinyos-ctp-reactive} & from 5, update body moved into \texttt{Reconfigure} & topology change & \checkmark & -- & -- \\
16 & \texttt{contiki-rplclassic-reactive} & from 4, update body moved into \texttt{Reconfigure} & topology change & \checkmark & -- & -- \\
17 & \texttt{openthread-parentsearch-reactive} & from 8, scan also called from \texttt{Reconfigure} & timer and topology change & \checkmark & -- & -- \\
14 & \texttt{meshtastic-nexthop-safe} & from 13, \texttt{Reconfigure} added & topology change & \checkmark & -- & -- \\
18 & \texttt{zephyr-nud-reactive} & from 11, \texttt{Reconfigure} added & topology change & \checkmark & -- & -- \\
19 & \texttt{zephyr-friend-reactive} & from 10, \texttt{Reconfigure} added & topology change & \checkmark & -- & -- \\
\bottomrule
\end{tabularx}
\end{table}

%% file: tables/tab-rq1-exact.tex
\begin{table}[t]
\caption{RQ1 under the reference schedule, showing the phases, of six, that end with the cached
destination not in the neighbor snapshot while at least one relay is adjacent. \emph{No repair} removes the
\texttt{Reconfigure} branch, and N/A means that the program has none. Programs 6, 14, and
16--19 are derived variants (Table~\ref{tab:oss-units}).}
\label{tab:rq1-exact}
\scriptsize
\resizebox{\columnwidth}{!}{%
\begin{tabular}{@{}rlrr@{\qquad}rlrr@{}}
\toprule
\# & MorphLang program & As written & No repair &
\# & MorphLang program & As written & No repair \\
\midrule
1 & \texttt{contiki-rpllite}         & 0/6 & 6/6 & 11 & \texttt{zephyr-nud}              & 0/6 & N/A \\
2 & \texttt{contiki-rpllite-repair}  & 0/6 & 6/6 & 12 & \texttt{riot-gnrcrpl}            & 0/6 & 6/6 \\
3 & \texttt{contiki-trickle-reset}   & 0/6 & 6/6 & 13 & \texttt{meshtastic-nexthop}      & 0/6 & N/A \\
4 & \texttt{contiki-rplclassic}      & 1/6 & N/A & 14 & \texttt{meshtastic-nexthop-safe} & 0/6 & 0/6 \\
5 & \texttt{tinyos-ctp-poll}         & 1/6 & N/A & 15 & \texttt{oonf-olsrv2}             & 0/6 & 6/6 \\
6 & \texttt{tinyos-ctp-reactive}     & 0/6 & 6/6 & 16 & \texttt{contiki-rplclassic-\linebreak[0]reactive} & 1/6 & 6/6 \\
7 & \texttt{openthread-routertable}  & 0/6 & 6/6 & 17 & \texttt{openthread-parentsearch-\linebreak[0]reactive} & 0/6 & 3/6 \\
8 & \texttt{openthread-parentsearch} & 3/6 & N/A & 18 & \texttt{zephyr-nud-\linebreak[0]reactive}     & 0/6 & 0/6 \\
9 & \texttt{openthread-leader}       & 0/6 & 4/6 & 19 & \texttt{zephyr-friend-\linebreak[0]reactive}  & 0/6 & 0/6 \\
10 & \texttt{zephyr-friend}          & 0/6 & N/A &    &                                     &     &     \\
\bottomrule
\end{tabular}%
}
\end{table}

%% file: tables/tab-rq1-sequence.tex
\begin{table}[t]
\caption{Direction of the RQ1 ablations over 20 replacement sequences, each at three
phase lengths. W, U, and B mean worse, unchanged, and better at all three lengths, and L means that the direction depends
on length. Each group sums to 20. The first row covers programs 1--3, 6, 7, 12,
15, and 16 individually, the last programs 14, 18, and 19.}
\label{tab:rq1-sequences}
\small
\setlength{\tabcolsep}{4pt}
\begin{tabularx}{\columnwidth}{@{}Xrrrrrrrr@{}}
\toprule
Program & \multicolumn{4}{c}{Branch removed} &
\multicolumn{4}{c}{Body emptied} \\
\cmidrule(lr){2-5}\cmidrule(l){6-9}
 & W & U & B & L & W & U & B & L \\
\midrule
Eight always-worse programs (each) & 20 & 0 & 0 & 0 & 20 & 0 & 0 & 0 \\
\texttt{openthread-leader} & 12 & 8 & 0 & 0 & 12 & 8 & 0 & 0 \\
\texttt{openthread-parentsearch-reactive} & 15 & 0 & 0 & 5 & 15 & 0 & 0 & 5 \\
Three programs that learn from received messages (each) & 0 & 20 & 0 & 0 & 0 & 20 & 0 & 0 \\
\bottomrule
\end{tabularx}
\end{table}

%% file: figures/fig-rq3-cost.tex
\begin{figure}[t]
\centering
\input{figures/fig-rq3-cost.tikz}
\caption{Cost of the eight MorphLang and C pairs that meet the matching
criterion, as MorphLang/C ratios on the reference schedule. Programs 1 to 4
use upstream C unchanged, Program 15 (\(^{*}\)) uses upstream C with its rate
limiter disabled, and \(^{\dagger}\) marks a manual C translation. Appendix
Table~\ref{tab:rq3-oss-cost} gives the values.}
\label{fig:rq3-cost}
\Description{A dot plot with one row per program and a dashed line at ratio 1.
Cycle ratios lie between 1.34 and 2.07 for seven programs and at 0.85 for
oonf-olsrv2. Size ratios lie between 0.48 and 0.66 for the five upstream-C
programs and between 1.13 and 1.40 for the three manual translations.}
\end{figure}

%% file: figures/fig-rq3-cost.tikz
\begin{tikzpicture}
\begin{axis}[
  scale=0.9,
  scale only axis, width=0.6\columnwidth, height=3.6cm,
  xmin=0.3, xmax=2.3, ymin=0.4, ymax=8.6,
  xlabel={MorphLang / C (1 means equal cost)},
  ytick={1,...,8}, y dir=reverse,
  yticklabels={
    {1 \texttt{contiki-rpllite}},
    {2 \texttt{contiki-rpllite-repair}},
    {3 \texttt{contiki-trickle-reset}},
    {4 \texttt{contiki-rplclassic}},
    {15 \texttt{oonf-olsrv2}\(^{*}\)},
    {5 \texttt{tinyos-ctp-poll}\(^{\dagger}\)},
    {7 \texttt{openthread-routertable}\(^{\dagger}\)},
    {11 \texttt{zephyr-nud}\(^{\dagger}\)}},
  tick label style={font=\scriptsize}, label style={font=\scriptsize},
  axis lines=left, y axis line style={-}, xmajorgrids, grid style={draw=black!10},
  legend style={font=\scriptsize, draw=none, fill=none, at={(0.5,1.03)}, anchor=south}, legend columns=2, legend style={/tikz/every even column/.append style={column sep=1em}},
  legend cell align=left,
]
\draw[black!45, dashed] (axis cs:1,0.4) -- (axis cs:1,8.6);
\addplot[only marks, mark=*, mark size=2pt, color=figrq3cyc]
  table[x=morphl_over_c_cycles, y expr=\coordindex+1, col sep=comma, row sep=crcr] {program_number,program,c_kind,morphl_cycles_per_tx,c_cycles_per_tx,morphl_over_c_cycles,morphl_bytes,c_bytes,morphl_over_c_bytes\\1,contiki-rpllite,upstream,98.09,65.61,1.4949400591623851,1504,2290,0.6567685589519651\\2,contiki-rpllite-repair,upstream,85.24,62.15,1.3715329454742253,1388,2548,0.5447409733124019\\3,contiki-trickle-reset,upstream,115.64,80.61,1.4344713656387664,1780,2908,0.6121045392022009\\4,contiki-rplclassic,upstream,203.88,151.62,1.3446601941747574,1608,3118,0.5157152020525978\\15,oonf-olsrv2,upstream-ratelimit-off,116.77,137.31,0.8504170528266913,1524,3152,0.48350253807106597\\5,tinyos-ctp-poll,manual,190.22,106.44,1.7871376811594204,1920,1376,1.3953488372093024\\7,openthread-routertable,manual,105.09,64.2,1.636864053377815,1592,1412,1.1274787535410764\\11,zephyr-nud,manual,129.36,62.35,2.0749486652977414,1768,1352,1.3076923076923077\\};
\addlegendentry{cycles per transmission}
\addplot[only marks, mark=square*, mark size=1.8pt, color=figrq3size]
  table[x=morphl_over_c_bytes, y expr=\coordindex+1, col sep=comma, row sep=crcr] {program_number,program,c_kind,morphl_cycles_per_tx,c_cycles_per_tx,morphl_over_c_cycles,morphl_bytes,c_bytes,morphl_over_c_bytes\\1,contiki-rpllite,upstream,98.09,65.61,1.4949400591623851,1504,2290,0.6567685589519651\\2,contiki-rpllite-repair,upstream,85.24,62.15,1.3715329454742253,1388,2548,0.5447409733124019\\3,contiki-trickle-reset,upstream,115.64,80.61,1.4344713656387664,1780,2908,0.6121045392022009\\4,contiki-rplclassic,upstream,203.88,151.62,1.3446601941747574,1608,3118,0.5157152020525978\\15,oonf-olsrv2,upstream-ratelimit-off,116.77,137.31,0.8504170528266913,1524,3152,0.48350253807106597\\5,tinyos-ctp-poll,manual,190.22,106.44,1.7871376811594204,1920,1376,1.3953488372093024\\7,openthread-routertable,manual,105.09,64.2,1.636864053377815,1592,1412,1.1274787535410764\\11,zephyr-nud,manual,129.36,62.35,2.0749486652977414,1768,1352,1.3076923076923077\\};
\addlegendentry{code and data size}
\end{axis}
\end{tikzpicture}

%% file: figures/fig-basys3-windows.tex
\begin{figure}[t]
\centering
\input{figures/fig-basys3-windows.tikz}
\caption{Repair windows of the leaf coordinator (Figure~\ref{fig:leaf}) running
on one chiplet's RTL on an FPGA, for the 1{,}260 swaps in which a host PC
replaces the node's only neighbor, by the loop position at which the commit lands.
A repair window is the number of cycles from a commit to the first send to the
new neighbor. Each circle stands for the 19 to 39 swaps at one position, and
every measured window lies on the RTL-simulation value, \(82-p\) for a commit
before the guard reads the revision and \(125-p\) for one after it, one loop
period later. The 20 attaches, not
shown, also match simulation, at 72 to 93 cycles.}
\label{fig:basys3-windows}
\Description{A scatter plot of repair window in cycles against loop position
from 2 to 44. A line falls from 80 at position 2 to 72 at position 10, jumps to
114 at position 11, and falls to 81 at position 44. Every measured FPGA point
lies on the line.}
\end{figure}

%% file: figures/fig-basys3-windows.tikz
\begin{tikzpicture}
\begin{axis}[
  scale=0.9,
  scale only axis, width=0.8\columnwidth, height=3.2cm,
  xmin=0, xmax=46, ymin=65, ymax=120,
  xlabel={Loop position \(p\) at which the commit lands (cycles)},
  ylabel={Repair window (cycles)},
  xtick={2,10,20,30,44}, ytick={72,80,90,100,114},
  tick label style={font=\scriptsize}, label style={font=\scriptsize},
  axis lines=left, grid=major, grid style={draw=black!10},
  legend style={font=\scriptsize, draw=none, fill=none, at={(0.5,1.03)}, anchor=south}, legend columns=2, legend style={/tikz/every even column/.append style={column sep=1em}},
  legend cell align=left,
]
\addplot[domain=2:10, samples=2, thick, color=figrq4sim, forget plot] {82-x};
\addplot[domain=11:44, samples=2, thick, color=figrq4sim] {125-x};
\addlegendentry{RTL simulation, \(82-p\) and \(125-p\)}
\addplot[only marks, mark=o, mark size=1.8pt, color=figrq4fpga]
  table[x=p, y=window, col sep=comma, row sep=crcr] {p,window,count\\2,80,21\\3,79,33\\4,78,29\\5,77,24\\6,76,23\\7,75,21\\8,74,29\\9,73,19\\10,72,28\\11,114,26\\12,113,30\\13,112,32\\14,111,31\\15,110,39\\16,109,27\\17,108,37\\18,107,27\\19,106,27\\20,105,23\\21,104,21\\22,103,31\\23,102,32\\24,101,29\\25,100,29\\26,99,33\\27,98,34\\28,97,33\\29,96,25\\30,95,32\\31,94,35\\32,93,37\\33,92,29\\34,91,33\\35,90,32\\36,89,29\\37,88,34\\38,87,36\\39,86,26\\40,85,28\\41,84,29\\42,83,32\\43,82,31\\44,81,24\\};
\addlegendentry{FPGA, measured}
\end{axis}
\end{tikzpicture}

%% file: tables/tab-related-comparison.tex
\begin{table}[t]
\caption{Mechanisms that respond to communication-structure changes, by their
usual guarantee on the four dimensions of this paper.}
\label{tab:related-comparison}
\scriptsize
\begin{tabularx}{\columnwidth}{@{}>{\raggedright\arraybackslash}p{0.22\columnwidth}>{\raggedright\arraybackslash}X>{\raggedright\arraybackslash}X>{\raggedright\arraybackslash}X>{\raggedright\arraybackslash}X@{}}
\toprule
Mechanism & Update & Order & Revision & Target check \\
\midrule
Event callbacks & callback & convention & no & no \\
Choreography/session types & protocol state & protocol actions & no & protocol only \\
Membership/recovery & application code & view/message order & view-level & no \\
Aggregate/field calculi & convergence rule & no local point & implicit & no \\
MorphLang/MorphC & \texttt{Reconfigure} & compiler guard & per role & numeric \\
\bottomrule
\end{tabularx}
\end{table}

%% file: appendix.tex
\section{Supporting Evidence and Technical Details}

Full program listings, line-level source records, mutation details, and
reproduction scripts belong to the artifact of the Data-Availability Statement,
which is not yet publicly available.

\subsection{Claims and Evidence}
\label{app:claim-evidence}

Table~\ref{tab:claim-evidence} maps each claim of Sections~\ref{sec:validation}
and~\ref{sec:evaluation} to its evidence and the source of its inputs.
\begin{table}[h]
\caption{Claims, evidence, and provenance for validation and evaluation.}
\label{tab:claim-evidence}
\scriptsize
\begin{tabularx}{\columnwidth}{@{}p{0.21\columnwidth}Xp{0.20\columnwidth}@{}}
\toprule
Component & Evidence & Source of inputs \\
\midrule
Explicit repair (R1) & RQ1 ablations and attach-once control
(Section~\ref{sec:whole-app}). & Upstream routines \\
Ordered repair (R2) & TLA+, Rocq, binary checker, timing test, stale sends
of guard mutants (RQ2), and the FPGA run (RQ4). & Authors only \\
State continuity (R3) & Theorem~\ref{thm:continuity}
(Section~\ref{sec:dispatch-validation}). & Authors only \\
Coordination (R4) & Generated/manual guard sites (RQ2). & Authors only \\
Send-target checking & Rocq proofs
(Section~\ref{sec:send-target-validation}) and RQ2. & Upstream routines and authors' labels \\
Implementation cost & Microbenchmark and C pairs (RQ3). & Upstream C and authors' C \\
Simulator final values & 48 kernel/pipeline comparisons
(Section~\ref{sec:lower-layers}). & QEMU and native execution \\
FPGA ordering and timing & 1{,}280 host-injected commits on one node and
19 programs, 114 phases (RQ4). & Authors only \\
Stale-state failures & 13 field reports, classified manually (Section~\ref{sec:field-evidence}). & Third-party reports \\
\bottomrule
\end{tabularx}
\end{table}




\subsection{Mapping from Upstream Code to MorphLang}
\label{app:oss-translation-map}

Table~\ref{tab:oss-translation-map} records what the authors retained and
changed in each of the 13 upstream programs. We wrote the programs manually from
upstream source, not from the C timing executables. For each upstream routine,
we identify the cached destination, its reads and updates, and the event or
callback that reaches them. We map that event to \texttt{Send},
\texttt{Receive}, \texttt{Sample}, or \texttt{Reconfigure} and keep the
selected assignments and conditionals in order. Each program re-implements
selected behavior rather than translating line by line, and its file documents
the replaced dependencies, such as ETX calculations, framework data structures,
and delaying callbacks. The table does not establish automatic translation or
whole-application equivalence.

\begin{table}[h]
\caption{How each of the 13 upstream functions or handlers maps to its
MorphLang program, giving the state and decision preserved, where MorphLang expresses
it, and the principal change or omission. The six derived variants are not
listed.}
\label{tab:oss-translation-map}
\fontsize{6.5pt}{7.2pt}\selectfont
\begin{tabularx}{\columnwidth}{@{}r >{\raggedright\arraybackslash}p{0.19\columnwidth} >{\raggedright\arraybackslash}X >{\raggedright\arraybackslash}X >{\raggedright\arraybackslash}X@{}}
\toprule
\# & Upstream routine & Preserved & MorphLang mapping & Changed or omitted \\
\midrule
1 & Contiki-NG RPL-lite parent selection & Select the best current
RPL parent and replace a removed parent. & \texttt{Reconfigure} scans current
neighbors and assigns \texttt{parent}, and \texttt{Send} uses it. & A core-id
ordering replaces the objective function, and the callback delay is removed. \\

2 & Contiki-NG RPL-lite local repair & Reset DAG state and select a new
parent after a newer-version packet. & \texttt{Receive} records the packet, and
\texttt{Reconfigure} resets rank and assigns \texttt{parent}. & The full DAG
and objective function are replaced by scalar state and a neighbor scan. \\

3 & Contiki-NG Trickle reset & Reset the Trickle interval
and redundancy counter under the upstream guard. & \texttt{Reconfigure}
performs the reset, and \texttt{Sample} advances the timer. & OS timer objects and
random transmission time are replaced by deterministic ticks. \\

4 & Contiki-NG RPL-classic rank update & Test and clear per-parent
feedback state before re-selection. & \texttt{Sample} tests \texttt{flagged}
and scans neighbors, and no \texttt{Reconfigure} is added. & One flag replaces
per-entry flags, and the downstream DAG-selection calls are reduced to a scan. \\

5 & TinyOS CTP route update & Candidate scan, current-parent test,
ETX threshold, and parent commit. & \texttt{Sample} contains the scan, and
\texttt{Receive} records beacon input. & nesC tasks, link-estimator storage,
and framework timers are replaced, and the periodic trigger is retained. \\

7 & OpenThread router-table update & Normalize route cost,
compare incumbent and candidate, and update or invalidate the next hop. &
\texttt{Receive} records advertised cost, and \texttt{Reconfigure} scans current
neighbors and assigns \texttt{nexthop}. & Route TLVs and router-table objects
are represented by scalar cost and one tracked destination. \\

8 & OpenThread parent-search timer & Timer and RSS conditions
that decide whether to search for a better parent. & \texttt{Sample} applies
the conditions and scans neighbors, and no \texttt{Reconfigure} is added. & The
multi-minute timer is scaled, and router entries and RSS history are reduced to
scalar inputs. \\

9 & OpenThread leader advertisement & Detect a partition-id
change and replace cached routing state. & \texttt{Receive} compares partition
ids, and \texttt{Reconfigure} selects the current destination. & Partition weighting,
attach state, RLOC16 addressing, and the address-resolver table are omitted. \\

10 & Zephyr Bluetooth Mesh friend timeout & Establish a friend from received
                                            messages and clear it after the poll
                                            timeout. & \texttt{Receive} assigns
\texttt{friendid}, \texttt{Sample} advances the timeout, and no
\texttt{Reconfigure} is added. & Credential exchange and buffered friendship messages are omitted. \\

11 & Zephyr IPv6 neighbor timeout & Advance NUD state and
remove an unreachable next hop after the retry limit. & \texttt{Send} enters
delay, \texttt{Sample} advances NUD, and \texttt{Receive} confirms a peer. &
The neighbor cache, route table, and solicitation packets are reduced to one
entry and counters. \\

12 & RIOT GNRC RPL parent removal & Remove the old
default route and choose the next preferred parent. & \texttt{Reconfigure}
scans current neighbors and assigns \texttt{parent}. & Objective-function
ordering is replaced by core-id ordering, and the FIB update is omitted. \\

13 & Meshtastic next-hop selection & Learn and retain a next
hop from acknowledgement or reply messages. & \texttt{Receive} assigns
\texttt{hop}, \texttt{Send} uses it, and no \texttt{Reconfigure} is added. & The
router database, retransmission policy, and channel logic are omitted. \\

15 & OONF OLSRv2 routing update & Set the pending flag, obey the update
trigger, and select the lowest-cost first hop. & \texttt{Receive} records a
metric change, and \texttt{Reconfigure} scans neighbors and assigns
\texttt{nexthop}. & The link-state database and kernel route installation are
omitted, and the rate-limit delay is removed in MorphLang. \\
\bottomrule
\end{tabularx}
\end{table}

\subsection{Manually Classified Reports of Stale Topology-Derived State}
\label{app:field-evidence}

Table~\ref{tab:field-evidence-full} lists the thirteen included reports with
the stale value, the change that failed to refresh it, and its group.
\emph{Analysis} marks the six reports with the stale-destination pattern
targeted by send-target checking (a manual applicability classification, not
checker runs on upstream code). \emph{Ordering} marks four reports for which the
ordered-repair contract can order a handler but the current analysis does not
apply, and \emph{outside} marks three in which the node receives no change
notification. All 103 candidates were screened, and 90 did not satisfy all
inclusion criteria, including five borderline cases with report-specific reasons
in the artifact. Each inclusion decision was checked against a quoted sentence
in the archived report. The artifact holds the fixed candidate set, the archive
of all 103 discussions with their hashes, the per-report answers, and the exact
supporting sentences.
Contiki-NG issues 890 and 933 are retained as corroboration of the issue-877
time-source-loop family, not counted as separate bugs.

\begin{table}[H]
\caption{The 13 included field reports. A \emph{send} failure is a later send that is not
delivered, and a \emph{read} failure is a read of a stale value or membership belief.}
\label{tab:field-evidence-full}
\scriptsize
\begin{tabularx}{\columnwidth}{@{}llXll@{}}
\toprule
Report & Stack & What went stale / trigger & Failure & Group \\
\midrule
\href{https://github.com/contiki-ng/contiki-ng/issues/2385}{\#2385} & Contiki-NG & route entry whose next hop re-parented away & send & analysis \\
\href{https://github.com/contiki-ng/contiki-ng/issues/877}{\#877} & Contiki-NG & TSCH time source, RPL parent reset & send & ordering \\
\href{https://github.com/contiki-ng/contiki-ng/issues/608}{\#608} & Contiki-NG & TSCH time source, nodes moved & send & ordering \\
\href{https://github.com/contiki-ng/contiki-ng/issues/2616}{\#2616} & Contiki-NG & \texttt{rpl\_dag\_t.min\_rank}, parent switch & read & ordering \\
\href{https://github.com/RIOT-OS/RIOT/issues/19423}{\#19423} & RIOT & downward route next hop, node re-parents & send & analysis \\
\href{https://github.com/openthread/openthread/issues/4240}{\#4240} & OpenThread & EID-to-RLOC entry, child changes parent & send & analysis \\
\href{https://github.com/openthread/openthread/issues/9409}{\#9409} & OpenThread & EID-to-RLOC entry, router ID released & send & analysis \\
\href{https://github.com/openthread/openthread/issues/8974}{\#8974} & OpenThread & EID-to-RLOC entry, child promoted to router & send & analysis \\
\href{https://github.com/openthread/openthread/issues/10737}{\#10737} & OpenThread & context ID, net data changes while detached & read & outside \\
\href{https://github.com/openthread/openthread/issues/9181}{\#9181} & OpenThread & parent link believed live, child evicted & read & outside \\
\href{https://github.com/openthread/openthread/issues/4472}{\#4472} & OpenThread & service RLOCs, server changes parent or role & send & analysis \\
\href{https://github.com/openthread/openthread/issues/7212}{\#7212} & OpenThread & parent recognition, parent drops sleepy child & send & outside \\
\href{https://github.com/zephyrproject-rtos/zephyr/issues/14063}{\#14063} & Zephyr & neighbor-table entries, neighbors become unreachable & send & ordering \\
\bottomrule
\end{tabularx}
\end{table}

\subsection{Validation Details of the Generated Loop}
\label{app:dispatch-detail}

This subsection gives the timing and snapshot details behind
Section~\ref{sec:dispatch-validation}.

\paragraph{Communication interval.}
A guard runs roughly a thousand times per phase, so a later guard can run a
repair omitted at an earlier site within microseconds and leave the same
phase-end destination. The \emph{communication interval} of a revision runs
from its commit to the last communication while the handled revision is older
than that revision. A commit can precede the read of the revision, so
communication in an already entered branch counts toward the interval without
violating ordered repair. The correct binaries reach 24 to 39 cycles. These are
empirical reference values, not semantic bounds, since mutants also change
instruction layout and alignment. Detection means an observation that differs
from the correct binary, not an independent proof of a contract violation.
Three mutants differ by both the interval and the state criteria of
Table~\ref{tab:mutation}.

\paragraph{Commits during repair.}
TLC explores 874 states of the split model for two roles and three revisions.
The variant that records the handled revision early and the variant with
snapshot drift each violate the corresponding invariant. The instruction-level
experiment uses one program whose repair queries neighbors on both sides of
helper calls. From compiler-emitted entry and completion addresses, a monitor
records the completed repair separately from the handled revision and compares
the neighbor-query values. Two further experiments exercise a commit in the same
cycle as a revision read, and sends dropped in the RTL because a departed
neighbor still occupies a table slot.

\subsection{The TLA+ Contract and the Checked Properties}
\label{app:tla-detail}

This subsection maps the semantic revisions and steps of
Section~\ref{sec:dispatch-validation} to TLA+ variables and actions, and
Table~\ref{tab:tla-results} reports the checked configurations.

The model state holds the commit counter $v$, the per-role delivered and handled
revisions $d_r$ and $h_r$, the role control states, a repair log $L$ of
$(r,d_r)$ pairs, two Boolean observers for duplicate repair and for an ordinary
branch before repair, and, for message-driven roles, whether a message is
pending. \texttt{TopologyChange} is the \textsc{Commit} rule and increments
$v$ only. \texttt{DeliverRevision}$(r)$ writes $v$ into $d_r$ independently of
the role's control state. \texttt{RevisionGuard}$(r)$ compares $h_r$ with $d_r$.
A role that is behind and declares \texttt{Reconfigure} records $(r,d_r)$ in $L$
and sets $h_r=d_r$ in the same step. Every ordinary action, \texttt{SendRun},
\texttt{SendSkip}, and \texttt{Recv}, includes this guard, so a delivery just
before a \texttt{Send} is still repaired first. The ordinary-branch observer is
set if an ordinary action runs at a nonzero delivered revision without the
matching repair record, and the ordering invariant requires it to stay false.

Three roles suffice for two to advance independently while a third receives a
message. The four- and five-role runs add mixed \texttt{Reconfigure}
declarations and queue states. Two to five changes cover a change before repair,
repeated changes while a repair is pending, and communication after a repair.

\begin{table}[h]
\caption{TLC results. Pass means that all reachable states were checked without a violation, and CEx means
the intended counterexample of a broken model.}
\label{tab:tla-results}
\scriptsize
\begin{tabularx}{\columnwidth}{@{}p{0.19\columnwidth}Xp{0.155\columnwidth}p{0.15\columnwidth}@{}}
\toprule
Configuration & Explored scope & Distinct states & Result \\
\midrule
Baseline & 3 roles, all repair-declaring, 1 message-driven, up to 4 changes &
  1,810,998 & Pass \\
Mixed & 3 roles, one not repair-declaring, up to 4 changes & 917,298 & Pass \\
Stress & 4 repair-declaring roles, up to 4 changes & 158,213,898 & Pass \\
Large & 5 roles, 3 repair-declaring, 2 message-driven, up to 2 changes &
  12,239,424 & Pass \\
Liveness & 3 roles, mixed repair declarations, up to 3 changes, weakly fair role loops
  and deliveries & 138,888 & Pass \\
Delivery & 3 roles, 3 rotating peers, up to 5 changes & 612,980 & Pass \\
\midrule
Five broken models & Handled revision recorded ahead of delivery / global-counter read / duplicate
repair / omitted branch check / stale target & 24 / 10 / 57 / 40 / 18 &
CEx depths 4 / 3 / 5 / 5 / 4 \\
\bottomrule
\end{tabularx}
\end{table}

Besides \texttt{TypeOK}, TLC checks that no role's handled revision passes its
delivered revision, that no delivery passes the commit counter, and that a role
records at most one repair per revision. It also checks that only a role
declaring \texttt{Reconfigure} records repairs, never at revision~0, and that
every ordinary action at a nonzero delivered revision has a matching repair
record (Theorem~\ref{thm:ordered-repair}). For the modeled repair that selects a
current neighbor, it checks that the cached target and every ordinary
\texttt{Send} target are current. Under weak fairness, it checks that a role
behind the current revision eventually handles it, that a pending repair
eventually runs, possibly coalesced into a newer one, and that once changes stop
every role becomes and stays current.

\subsection{Rocq Theorems}
\label{app:coq-send-target}

Table~\ref{tab:coq-map} maps each mechanized result of
Sections~\ref{sec:semantics-implementation} and~\ref{sec:validation} to its
Rocq name.

\begin{table}[h]
\caption{Paper results and the Rocq theorems and examples that establish them.}
\label{tab:coq-map}
\scriptsize
\begin{tabularx}{\columnwidth}{@{}p{0.27\columnwidth}Xl@{}}
\toprule
Result & Rocq name & File \\
\midrule
Theorem~\ref{thm:ordered-repair} & \nolinkurl{ordinary_runs_after_repair} & \nolinkurl{MorphLang.v} \\
Theorem~\ref{thm:coalescing} & \nolinkurl{reconfigure_at_most_once}, \nolinkurl{overlapping_commits_coalesce} & \nolinkurl{MorphLang.v} \\
Theorem~\ref{thm:continuity} & \nolinkurl{topology_change_preserves_application_state}, \nolinkurl{deliver_preserves_application_state} & \nolinkurl{MorphLang.v} \\
Handled never exceeds delivered & \nolinkurl{acknowledgement_not_ahead_of_delivery} & \nolinkurl{MorphLang.v} \\
Commits during repair & \nolinkurl{delivery_stable_during_repair}, \nolinkurl{ordinary_uses_completed_repair} & \nolinkurl{MorphLang_snapshot.v} \\
Theorem~\ref{thm:check-sound}(a), current expressions & \nolinkurl{fixed_dependencies_are_rejected}, \nolinkurl{current_expression_is_fresh} & \nolinkurl{MorphLang_qualifiers.v} \\
Theorem~\ref{thm:check-sound}, refresh analysis & \nolinkurl{An_sound}, \nolinkurl{An_sound_empty}, \nolinkurl{reconfigure_then_send_is_fresh} & \nolinkurl{MorphLang_refresh.v} \\
Examples of rules that do not record & \nolinkurl{conditional_refresh_is_not_definite}, \nolinkurl{possibly_empty_loop_refreshes_nothing}, \nolinkurl{overwrite_after_refresh_is_lost} & \nolinkurl{MorphLang_refresh.v} \\
Theorem~\ref{thm:preservation} & \nolinkurl{reconfigure_then_ordinary_send_has_no_fixed_origin}, \nolinkurl{reconfigure_then_ordinary_send_is_fresh}, \nolinkurl{ordinary_sound}, \nolinkurl{runs_sound} & \nolinkurl{MorphLang_refresh.v} \\
Theorem~\ref{thm:preservation}, rejection & \nolinkurl{fixed_dependenciesO_are_rejected} & \nolinkurl{MorphLang_qualifiers.v} \\
\bottomrule
\end{tabularx}
\end{table}

\subsection{Simulator, RTL, and FPGA Details}
\label{app:cross-layer}

This subsection describes the RTL link interface, the MorphSim comparison, and
the FPGA runs behind Section~\ref{sec:lower-layers}.

The RTL of Section~\ref{sec:setup}\anon[]{ (mm-node)} maps the link peripheral
into a 32-byte window. Its registers give the neighbor count, a neighbor index,
the identifier of the neighbor at that index, a receive-queue status bit, the
next received word and its source, a transmit register, and the revision. An
aligned word load of the revision register acquires that revision and its
neighbor table together, and the count, the identifiers, and the resolution of
transmit slots keep that table until the next such load. The host port writes a
staged table, commits it, injects received words, and drains a log of each
send's destination, slot, data, and revision, or its drop. The core is a
two-stage in-order RV32I pipeline, and the testbenches run under Icarus Verilog.

For each host-driven relay replacement, MorphSim rebuilds the neighbor snapshots
and increments the committed revision. The RQ1 schedule exchanges the only relay
in range, so \texttt{neighbor\-Count()} stays one while \texttt{neighborId(0)}
and the revision change. A packet-level reference model checks neighbor
discovery, FIFO receive order, unreachable-send drops, and revisions. With the same MorphLang
leaf coordinator compiled for each target, the RTL testbench gives a repair
window, from the commit to the first send to the new parent, of 81 cycles at the
first commit and 75 at re-selection, against 82 in MorphSim, and both have a 43-cycle loop period. Over all 43 loop positions at
which the commit can land, a count-preserving swap gives 72 to 114 cycles on the
RTL. On seven evaluation programs that declare a repair, both implementations
satisfy the same send checks with identical loop periods (69 or 70 cycles, 43
for one), and repair windows are 92 to 146 cycles in MorphSim and 91 to 140 on
the RTL. The 12 QEMU kernels (GCC 16.2.0, \texttt{qemu-riscv32} 7.2.22, and
native execution with sanitizers) test final values, not topology MMIO or cycle
counts. The artifact documents layouts, hashes, and tool versions.

\subsubsection{The FPGA Run}
\label{app:basys3}

The Basys~3 bitstream is built with an open-source flow (openXC7) or
Vivado~2024.2 from the unchanged core and link RTL, plus a UART host port and a
stamp unit. Each run resets the board and replays one host script that attaches
core~5 and then alternates the single neighbor between cores~5 and~9, 64 commits
in all. The stamp unit records each commit cycle
and the cycles of the following eight sends, with no record lost in any run.
Attach commits land at 15 of the 23 offsets possible after start-up, and one
attach per run cannot cover all of them. Figure~\ref{fig:basys3-windows} shows
the swap windows by loop position. The 81 and 75 cycles reported above equal
\(125-p\) at \(p=44\) and \(82-p\) at \(p=7\).

Each of the 19 programs is compiled into its own bitstream for the RQ1
reference schedule. The host replays the attachment and five replacements of that schedule, the
relays of the committed table beacon at the core on chip, and the board
reports the sends, deliveries, and last destination per phase. The
core starts on the host's first command, so its timers keep the offset from
each commit that RTL simulation of the same script gives. Each simulated phase
lasts the phase length used in MorphSim. The guarded and unguarded images of
Section~\ref{sec:rq3} run over 20 and 25 runs of 64 commits, as for the leaf.

\subsection{RQ3 Cost Values}
\label{app:rq3-values}

Table~\ref{tab:rq3-oss-cost} gives the values plotted in Figure~\ref{fig:rq3-cost}.

\begin{table}[h]
\caption{Cost of the eight of nine MorphLang and C pairs that meet the
phase-wise matching criterion, on the six-phase reference schedule. Cycles per transmission is
the number of cycles per issued link transmission.}
\label{tab:rq3-oss-cost}
\scriptsize
\begin{tabularx}{\columnwidth}{@{}lXrrrrrr@{}}
\toprule
\# & Program & \multicolumn{3}{c}{Cycles/transmission} &
\multicolumn{3}{c}{Code + data bytes} \\
  & & MorphLang & C & MorphLang/C & MorphLang & C & MorphLang/C \\
\midrule
\multicolumn{8}{@{}l}{\emph{Upstream C used unchanged}} \\
1 & \texttt{contiki-rpllite}        & 98.09  & 65.61  & 1.49 & 1{,}504 & 2{,}290 & 0.66 \\
2 & \texttt{contiki-rpllite-repair} & 85.24  & 62.15  & 1.37 & 1{,}388 & 2{,}548 & 0.54 \\
3 & \texttt{contiki-trickle-reset}  & 115.64 & 80.61  & 1.43 & 1{,}780 & 2{,}908 & 0.61 \\
4 & \texttt{contiki-rplclassic}     & 203.88 & 151.62 & 1.34 & 1{,}608 & 3{,}118 & 0.52 \\
\addlinespace[2pt]
\multicolumn{8}{@{}l}{\emph{Upstream C, rate limiter disabled}} \\
15 & \texttt{oonf-olsrv2}          & 116.77 & 137.31 & 0.85 & 1{,}524 & 3{,}152 & 0.48 \\
\addlinespace[2pt]
\multicolumn{8}{@{}l}{\emph{Manual C translation}} \\
5 & \texttt{tinyos-ctp-poll}        & 190.22 & 106.44 & 1.79 & 1{,}920 & 1{,}376 & 1.40 \\
7 & \texttt{openthread-routertable} & 105.09 & 64.20  & 1.64 & 1{,}592 & 1{,}412 & 1.13 \\
11 & \texttt{zephyr-nud}             & 129.36 & 62.35  & 2.07 & 1{,}768 & 1{,}352 & 1.31 \\
\bottomrule
\end{tabularx}
\end{table}